\numberwithin{equation}{section}
\newtheorem{theorem}{Theorem}[section]
\newtheorem{corollary}[theorem]{Corollary}
\newtheorem{proposition}[theorem]{Proposition}
\theoremstyle{definition}
\newtheorem{definition}[theorem]{Definition}
\newtheorem{remark}[theorem]{Remark}
\newtheorem{assumption}[theorem]{Assumption}
\newcommand{\ind}{1\hspace{-2.1mm}{1}}
\newcommand{\QQ}{\mathbb{Q}}
\newcommand{\RR}{\mathbb{R}}
\newcommand{\PP}{\mathbb{P}}
\newcommand{\TT}{\mathbb{T}}
\newcommand{\TTp}{\mathbb{T}}
\newcommand{\Vv}{\mathfrak{V}}
\newcommand{\NN}{\mathbb{N}}
\newcommand{\D}{\mathrm{d}}
\newcommand{\Bb}{\mathcal{B}}
\newcommand{\pd}{\partial}
\newcommand{\Dd}{\mathcal{D}}
\newcommand{\Ddg}{\mathcal{D}^{\gamma}}
\newcommand{\Ee}{\mathcal{E}}
\newcommand{\Gg}{\mathcal{G}}
\newcommand{\Ggm}{\mathcal{G}^{\Hm}}
\newcommand{\Corr}{\mathrm{Corr}}
\newcommand{\E}{\mathrm{e}}
\newcommand{\bbf}{\mathfrak{b}}
\newcommand{\Bf}{\mathfrak{B}}
\newcommand{\kf}{\mathtt{k}}
\newcommand{\Kf}{\mathtt{K}}
\newcommand{\ssf}{\chi}
\newcommand{\Gf}{\mathfrak{G}}
\newcommand{\Gfp}{\Gf^{+}}
\newcommand{\Gfm}{\Gf^{-}}
\newcommand{\EE}{\mathbb{E}}
\newcommand{\Ff}{\mathcal{F}}
\newcommand{\FF}{\mathbb{F}}
\newcommand{\Hm}{H_{-}}
\newcommand{\Hp}{H_{+}}
\newcommand{\rrho}{\overline{\rho}}
\newcommand{\Wf}{\boldsymbol{W}}
\newcommand{\Ddgi}{\reflectbox{$\Dd$}^\gamma}
\newcommand{\half}{\frac{1}{2}}
\definecolor{cadmiumgreen}{rgb}{0.0, 0.42, 0.24}
\begin{document}
\title{Risk premium and rough volatility}
\date{\today}
\author{Ofelia Bonesini}
\address{Department of Mathematics, London School of Economics and Political Science}
\email{o.bonesini@lse.ac.uk}

\author{Antoine Jacquier}
\address{Department of Mathematics, Imperial College London}
\email{a.jacquier@imperial.ac.uk}

\author{Aitor Muguruza}
\address{Department of Mathematics, Imperial College London and Kaiju Capital Management}
\email{aitor.muguruza-gonzalez15@imperial.ac.uk}

\subjclass[2010]{60G15, 60G22, 91G20}
\keywords{risk premium, fractional Brownian motion, rough volatility}
\maketitle
\tableofcontents

\begin{abstract}
One the one hand, rough volatility has been shown to provide a consistent framework to capture the properties of stock price dynamics both under the historical measure and for pricing purposes.
On the other hand, market price of volatility risk is a well-studied object in Financial Economics, and empirical estimates show it to be stochastic rather than deterministic.
Starting from a rough volatility model under the historical measure, we take up this challenge and provide an analysis of the impact of such a non-deterministic risk for pricing purposes.
\end{abstract}


\section*{Introduction}

Rough volatility is a recent paradigm proposed by Gatheral, Jaisson and Rosenbaum~\cite{gatheral2022volatility}, 
which has attracted the attention of many academics and practitioners thanks to its numerous attractive properties~\cite{bayer2023rough}.
Despite some debate about whether volatility should be rough~\cite{jaber2022quintic,jaber2024volatility,guyon2023volatility,romer2022empirical},
this class of models provides a general framework to analyse 
both time series of the instantaneous volatility (under the historical measure~$\PP$)
and prices of financial derivatives (under the pricing measure~$\QQ$).
Starting from a rough version of the Bergomi model~\cite{Bergomi} under~$\PP$, 
Bayer, Friz and Gatheral~\cite{BFG15} showed that a deterministic market price of risk
preserved its structure under~$\QQ$ (somehow akin to the Heston model~\cite{heston1993closed} specification).

However, the Financial Economics literature has long shown that this market price of risk,
monitoring the transition from~$\PP$ to~$\QQ$ via Girsanov's transform,
is not constant nor deterministic but instead stochastic.
Its estimation has been the source of long academic discussions, 
outside the scope of the present paper though,
and we refer the interested reader to~\cite{bansal2002market, carr2016analyzing, chabi2012pricing, duan2014forward, kaido2009inference, mcdonald1984option} for some useful pointers.
This of course has serious practical implications for risk management,
and~\cite{graham2005long} are fascinating sources of information.
We focus here on this particular bridge between~$\PP$ and~$\QQ$ 
and show, not surprisingly, that the required stochasticity of the market price of risk unfortunately breaks the structure of the rough Bergomi model under~$\QQ$.
However, we link the H\"older regularity of the volatility process (lower in this class of rough models) with that of the change of measure, 
and design several specifications making the model tractable under~$\QQ$.
While the rough Bergomi model tracks well the behaviour of the historical volatility,
it is less powerful for option prices, especially when considering VIX smiles 
(which are more or less flat under this model).
Our new setup allows for more flexibility there, while preserving the $\PP$-tractability of the model.

Section~\ref{sec:Setup} provides the technical setup and analysis of the market price of risk, while the design of useful continuous-time rough stochastic volatility models with non-deterministic market prices of risk is detailed in Section~\ref{sec:Practical}.
Finally, in Section~\ref{sec:Extracting}, we perform an empirical analysis, 
estimating risk premia from historical options data.

\section{Rough volatility models and change of measure}
\label{sec:Setup}
Rough volatility models are a natural extension of classical stochastic volatility models. 
Starting from such a model under the historical measure~$\PP$, we characterise below its dynamics under martingale measures~$\QQ$ equivalent to~$\PP$, 
which then, by the fundamental theorem of asset pricing, allows for arbitrage-free option pricing.
Following~\cite{gatheral2022volatility} for example, we consider a rather general class of (rough) stochastic volatility models under~$\PP$, 
where the stock price process admits the following dynamics:
\begin{equation}\label{rough vol GJR}
    \left\{
    \begin{array}{rl}
        \displaystyle \frac{\D S_t}{S_t} & =  \displaystyle \mu_t \D t + \sqrt{v_t}\, \D W^\PP_t,\\
        v_t & = \displaystyle \psi(t,Y_t) , \\
        Y_t & = \displaystyle \int_{0}^t k(t,s) \D Z^\PP_s,
    \end{array}
    \right.
\end{equation}
starting from $S_0, v_0>0$, 
over a fixed time interval
$\TT := [0,T]$, for $T>0$.
Here, {$\mu$ is a $(\Ff_t)_{t\in\TT}$-measurable process, $\psi:\TT\times\RR\to (0,\infty)$ a continuous function, $\Wf^{\PP} = (W^{\PP}, W^{\PP, \perp})$ a two-dimensional standard Brownian motion  on a given filtered probability space $(\Omega,\Ff,\PP)$, with $\Ff := \Ff^{W^{\PP}}\vee \Ff^{W^{\PP, \perp}}$,
and $Z^{\PP} := \rho W^{\PP} + \rrho W^{\PP, \perp}$,
where $\rho \in [-1,1]$ and $\rrho:=\sqrt{1-\rho^2}$.
We assume that 
for each $t\in\TT$, the kernel $k(t,\cdot)$ is null on $\TT\setminus[0,t]$ and $\int_{0}^{\cdot} k(\cdot,s) \D Z^\PP_s$ is a well-defined continuous Gaussian process.
In particular, $k(t,\cdot) \in L^2([0,t])$ for each $t\in\TT$ ensures that the process is well defined, and we refer to~[Section~5]\cite{alos2001stochastic} or~\cite[Section~3]{decreusefond2005stochastic} for conditions of continuity criteria.
For example, the Gamma kernel, {commonly used to model turbulence, and  pioneered by  Barndorff-Nielsen and Schmiegel~\cite{Barndorff-Nielsen}}, is given by
\begin{equation}\label{eq:gammaKernel}
k(t,s)= (t-s)^{H-\half} \E^{-\beta(t-s)}\ind_{\{t\geq s\}}, 
\qquad \text{with }H\in(0,1), \quad \beta\geq0. 
\end{equation}
}
We further introduce the
set~$\FF_{b}$ of $\PP$-bounded and $(\Ff_t)_{t\in\TT}$-progressively measurable processes, {namely $X \in \FF_{b}$ if there exists a constant $c>0$ such that
$\PP\left(\sup_{t\geq 0} |X_t| \leq c\right)=1$},
and recall the Dol\'eans-Dade stochastic exponential of a square integrable {continuous} process~$X$:
$$
    \Ee(X)_t := \exp{\left( X_t -\half \langle X\rangle_t\right)}, \qquad \text{for }t \in \TT.
$$
{Finally, we introduce a progressively measurable interest rate process $(r_t)_{t \in \TT}$, define the corresponding Sharpe ratio by $\displaystyle \chi_t:= \frac{r_t-\mu_t}{\sqrt{v_t}}$, for $t\in \TT$ and consider the following assumption:

\begin{assumption}\label{assump_0}
The processes $\chi$ and $\gamma$ are $(\Ff_t)_{t\in\TT}$-adapted with c\`adl\`ag paths.
\end{assumption}
In order to state the main result, define the Radon-Nikodym derivative, for each $t\in \TTp$,
\begin{equation}\label{eq:RadonNik}
    \Ddg_t:=\frac{\D\QQ}{\D\PP}\bigg|_{\Ff_t^{\PP}}
    =\Ee\left(\int_{0}^{\cdot}\ssf_u\D W^\PP_u + \int_{0}^{\cdot}\gamma_u \D W^{\PP,{\perp}}_u\right)_t.
\end{equation}

\begin{proposition}\label{prop:local martingale}
Under Assumption~\ref{assump_0},
$\Ddg$ is a locally integrable $\PP$-local martingale on~$\TT$.
\end{proposition}

\begin{proof}
Since~$W^{\PP}$ and~$W^{\PP, \perp}$ are independent Brownian motions (and so locally square-integrable local martingales), thus by~\cite[Section~II, Theorem~20]{protter2005stochastic} the processes $X^1:=\int_0^{\cdot} \chi_u \D W^{\PP} _u$ and $X^2:=\int_0^{\cdot} \gamma_u \D W^{\PP, \perp} _u$ are locally square-integrable local martingales. 
The sum of two locally square-integrable local martingales is so itself and hence so is $X := X^1 + X^2$.
Finally, notice that the Radon-Nikodym derivative~~$\Ddg$ defined in~\eqref{eq:RadonNik} is precisely the stochastic exponential of $X$ and so a non-negative local martingale itself. This implies that~$\Ddg$ is a positive super-martingale which together with the fact that $\Ddg_0=1$ yields $\Ddg_t \in L^1$ for all $t \in \TT$.
\end{proof}
We finally consider the following set of assumptions,
in place for the rest of the paper:
\begin{assumption}\label{assu:General}\ 
\begin{enumerate}[(i)]
\item The function $\psi:\TT\times\RR\to (0,\infty)$ is continuous, bounded, and bounded away from the origin on $\TT\times(-\infty,a]$ for each $a>0$;
\item $K_\TT>0$ such that
$\sup_{t\in\TT} \left\{\int_{0}^t k(t,u)\lambda_u\D u\right\}\leq K_\TT$,
$\PP$-almost surely,
where~$\lambda$ denotes the market price of volatility risk defined by
\begin{equation}\label{eq:lambda}
    \lambda_t := \rho \ssf_t + \rrho\;\gamma_t;
\end{equation}
\item The correlation is negative: $\rho\leq 0$;
\item The Radon-Nikodym derivative satisfies $\EE[\Ddg_t]=1$ for all $t\in \TT$.
\end{enumerate}
\end{assumption}

\begin{remark}
    Note that Assumption~\ref{assu:General}(ii) also implies that, for all $n \in \NN$,
    \begin{equation}
        \sup_{t\in\TT} \left\{\int_{0}^t k(t,u)\lambda_u\D u\right\}\leq K_\TT,\qquad{\widehat{\PP}_n}\text{-almost surely},
    \end{equation}
    with
    \begin{equation}\label{eq: Pn Measure}
        \frac{\D\widehat{\PP}_n}{\D\PP}\bigg|_{\Ff_t}
        := \Ee\left(\int_{0}^{\cdot}\ssf_u\D W^\PP_u + \int_{0}^{\cdot}\gamma_u\D W^{\PP,{\perp}}_u\right)_{t\wedge\tau_n} \quad\text{and}\quad \tau_n := \inf\{t\geq 0,\; Y_t=n\}.
    \end{equation}
    Indeed, given an event~$A$ such that $\PP(A)=1$, Cauchy-Schwarz inequality yields
    \begin{align*}
        1
        & = \PP(A) = \EE[\bold{1}_A] = \widehat{\EE}_n[\bold{1}_A \Ee(X)^{-1}]
        \leq \widehat{\EE}_n[\bold{1}_A]^{\half} \widehat{\EE}_n[\Ee(X)^{-2}]^{\half}
        = \widehat{\PP}_n(A)^{\half} {\EE}[\Ee(X)^{-2}\Ee(X)]^{\half}\\
        & = \widehat{\PP}_n(A)^{\half} {\EE}[\Ee(X)^{-1}]^{\half}
        = \widehat{\PP}_n(A)^{\half} {\EE}\left[\Ee(-X)\exp\left(\int_0^{t \land \tau_n} \chi_u^2 + \gamma_u^2 \D u \right)\right]^{\half}
        \leq \widehat{\PP}_n(A)^{\half}, 
    \end{align*}
    where in the last step we have exploited the fact that~$\Ee(-X)$ is a non-negative supermartingale since~$-X$ a continuous local martingale.
    Thus, we have proved $\widehat{\PP}_n(A)=1$. 
    An analogous argument shows that the same holds for~$\QQ$.
\end{remark}

\begin{remark}\label{rem_eq_c}
    Assumption~\ref{assu:General}(iv) is guaranteed under different sets of stronger assumptions on $\gamma$, $\mu$, $\chi$, $r$ and~$v$, in particular
    \begin{itemize}
        \item if 
        $\displaystyle \EE\left[\exp\left\{\half \int_0^T |X_t|^2 \D t\right\}\right] <\infty$, namely~$X$ satisfy the Novikov condition;
        \item if Assumption~\ref{assump_0} and~\ref{assu:General}(i)-(ii) hold and  processes~$\mu, \gamma, r$ belong to~$\FF_{b}$, as detailed in Appendix~\ref{app:rem_eq_c}.
    \end{itemize}
\end{remark}
Proposition~\ref{prop:local martingale} justifies the use of a Dol\'eans-Dade exponential in the definition of~$\Ddg$.

\begin{theorem}\label{thm:changeOfMeasure}
Under Assumptions~\ref{assump_0} and~\ref{assu:General}, 
the following hold:
\begin{enumerate}[(I)]
\item the Radon-Nikodym derivative process~$\Ddg$ in~\eqref{eq:RadonNik}
is a true $\QQ$-martingale;
\item under the (arbitrage-free) equivalent risk-neutral martingale measure~$\QQ$, 
\begin{equation}\label{rough pricing Q}
\left\{
\begin{array}{rl}
\displaystyle \frac{\D S_t}{S_t} & =  \displaystyle r_t \D t + \sqrt{v_t} \,\D W^\QQ_t,\\
v_t & = \displaystyle \psi\left(t,\widehat{Y}_t +\int_{0}^t k(t,s)\lambda_s \D s\right),\\
\widehat{Y_t} & = \displaystyle \int_{0}^t k(t,s)\D Z^\QQ_s,
\end{array}
\right.
\end{equation}
with $S_0, v_0>0$, and~$\lambda$ is the market price of volatility risk in~\eqref{eq:lambda}
and where~$W^\QQ$ and~$Z^\QQ$ are $\QQ$-Brownian motions defined as
\begin{equation}\label{eq:QBM}
W^\QQ := \displaystyle W^\PP + \int_{0}^{\cdot}\ssf_u\D u
\qquad\text{and}\qquad
Z^\QQ := 
Z^\PP + \int_{0}^{\cdot} \lambda_u \D u;
\end{equation}
\item the discounted stock price $\widetilde{S} := \frac{S}{B}$ with $\D B_t = r_t B_t \D t $, $B_0=1$, is a true $\QQ$-martingale.
\end{enumerate}
\end{theorem}

\begin{proof}
To satisfy the no-arbitrage conditions, the change of measure for~$W^\PP$ is constrained by the martingale restriction on the discounted spot dynamics, 
while the Brownian motion~$Z^\PP$ gives freedom to the model and makes the market incomplete by the free choice of the process~$\gamma$. 
Consequently, the change of measure from~$\PP$ to~$\QQ$ and the corresponding Radon-Nikodym derivative directly follow from Girsanov's Theorem via~\eqref{eq:RadonNik}, provided that $\Ddg_{t}\in L^1$ and~$\Ddg$ is a true martingale. 
Thus, once we have shown~(I),
then~(II) automatically follows.
By Proposition~\ref{prop:local martingale}, $\Ddg_{t}\in L^1$ and, being a non-negative local martingale, 
is a super-martingale, and a true martingale on~$\TTp$ if and only if $\EE\left[\Ddg_{T}\right]=1$. This is guaranteed by~Assumption~\ref{assu:General}(iv), hence~(I) holds, and therefore~(II) as well.

\noindent We now prove (III): \textit{the discounted price $\widetilde{S} = \frac{S}{B}$ is a true martingale for $\rho\leq 0$}.
It\^{o}'s formula under~$\QQ$ yields
\begin{equation}\label{eq:disc_price}
\left\{
\begin{array}{rl}
\displaystyle \frac{\D \widetilde S_t}{ \widetilde S_t}
& =  \sqrt{v_t} \,\D W^\QQ_t,\\
v_t & = \displaystyle \psi\left(t,\widehat{Y}_t +\int_{0}^t k(t,s)\lambda_s \D s\right),\\
\widehat{Y_t} & = \displaystyle \int_{0}^{t} k(t,s)\D Z^\QQ_s.
\end{array}
\right.
\end{equation}
Define the stopping time $\iota_n:=\inf\{t\geq 0,\; \widehat{Y}_t = n\}$. 
For any $t\in\TT$, the random function $g(x):=\psi\left(t, x+\int_{0}^t k(t,s)\lambda_s \D s\right)$ 
is bounded $\QQ$-almost surely on $(-\infty, a]$ by Assumption~\ref{assu:General}(i)-(ii), 
with~$\lambda$ in~\eqref{eq:lambda}, so that, since $\widetilde S$ is a $\QQ$-local martingale:
\begin{align}
\widetilde S_0
 = \EE^{\QQ}[\widetilde S_{T\wedge \iota_n}]
 = \EE^{\QQ}[\widetilde S_T \ind_{\{T< \iota_n\}}]+\EE^{\QQ}[\widetilde S_{\iota_n} \ind_{\{T> \iota_n\}}].
\end{align}
The first term converges to~$\EE^{\QQ}[\widetilde S_{T}]$ as~$n$ tends to infinity, hence
\begin{align}
\widetilde S_0-\EE^{\QQ}[\widetilde S_{T}]=\lim_{n\uparrow\infty}\EE^{\QQ}[\widetilde S_{\iota_n} \ind_{\{T> \iota_n\}}].
\end{align}
Girsanov's Theorem further gives
$\EE^{\QQ}[\widetilde S_T \ind_{\{T> \iota_n\}}]= \widetilde S_0\widetilde{\PP}_n(T> \iota_n)$, where $\widetilde{\PP}_n$ is such that
$\widetilde{W}^n_t=W^\QQ_t-\int_{0}^{t\wedge \iota_n}v_s \D s
$
is a $\widetilde{\PP}_n$-Brownian motion. 
Note that, for $t<\iota_n$, $\widehat{Y}_t = \widetilde{Y}_t + \rho \int_{0}^t k(t,s)v_s  \D s$, where $\widetilde{Y}_t=\int_{0}^t k(t,s) \D \widetilde{Z}^n_s$ and $\widetilde{Z}^n_t:=Z^\QQ_t-\rho \int_{0}^t k(t,s)v_s  \D s$ is also a $\widetilde{\PP}_n$-Brownian motion. 
We conclude that, if $\rho\leq 0$, then $\widehat{Y}_t\geq \widetilde{Y}_t$ and
$$
\lim_{n\uparrow\infty}\widetilde{\PP}_n(\iota_{n}\leq T)
 \leq \lim_{n\uparrow\infty}\widetilde{\PP}_n(\widetilde{\iota}_n\leq T)
 = \lim_{n\uparrow\infty}\PP\left(\sup_{t\in[0,T]} Y_t\leq n\right)
 =0,
$$
where $\widetilde{\iota}_n:=\inf\{t\geq 0,\; \widetilde{Y}_t=n\}$, and hence~$\widetilde S$ is a true martingale.
\end{proof}

}

\begin{remark}
Under Assumption~\ref{assu:General}, 
consider $\rho\leq 0$, some valid function~$\psi$ and kernel~$k$,
and the constant values $\gamma_s=\overline{\gamma}$ and $\overline{\mu}=\mu_s\leq r_s=\overline{r}$ for $\overline{\gamma},\overline{\mu},\overline{r}\in\mathbb{R}$
ensuring Assumption~\ref{assu:General}(iv), 
so that Theorem~\ref{thm:changeOfMeasure} applies.
In this scenario, a sufficient condition for the change of measure to be well defined is that the physical drift must be smaller than the risk-free rate. 
\end{remark}

\subsection{Rough volatility models via Generalised Fractional Operators}
{Many rough volatility models can be represented~\cite{horvath2024functional} in terms of Generalised Fractional Operators (GFOs), which are defined as follows~\cite[Definition 1.1]{horvath2024functional}:}
\begin{definition}\label{def:GFO}
For any $\beta\in(0,1)$, $\alpha\in(-\beta,1-\beta)$ and $h\in\mathcal{C}^1_b((0,\infty))$ such that $h'(\cdot)\leq 0$,
the GFO associated to the kernel $k(x):=x^{\alpha}h(x)$ applied to $f\in C^\beta(\RR)$ is defined as
\begin{equation}\label{eq:GFO}
(\Gg^\alpha f)(t) := 
\left\{
\begin{array}{ll}
\displaystyle \int_{0}^{t}(f(s)-f(0))\frac{\D}{\D t}k(t-s)\D s, & \text{if } \alpha \in [0,1-\beta),\\
\displaystyle \frac{\D}{\D t}\int_{0}^{t}(f(s)-f(0))k(t-s)\D s, & \text{if } \alpha \in (-\beta, 0).
\end{array}
\right.
\end{equation}
\end{definition}

To simplify future notations, 
we let $H_{\pm} := H \pm \half$
for $H \in (0, \half)$.
We now introduce a specific setup that will drive the rest of our computations:
consider the power-law kernel
\begin{equation}\label{eq:kernelPower}
\kf_{\alpha}(u) := u^{\alpha}\ind_{\{u\geq 0\}},
\end{equation}
as well as the set
$\displaystyle \Lambda_{\beta,H} :=
\Big\{\lambda\in\mathcal{C}^\beta
\text{ for some $\beta\in(0,1]$ such that $H_- \in (-\beta, 0)$ and $\lambda(0) = 0$}
\Big\}$.
To this particular power-law kernel, the GFO 
(from Definition~\ref{def:GFO}, since $\Hm \in (-\half,0)$) reads
$$
(\Ggm f)(t) = \frac{\D}{\D t}
\int_{0}^{t}(f(s)-f(0))\kf_{\Hm}(t-s)\D s.
$$
Denote further
$$
\Kf(t,s):=\int_{0}^{t} \kf_{\Hm}(u-s)\D u
= \frac{\kf_{\Hp}(t-s)}{H_+},
$$
so that the corresponding GFO is precisely $\frac{1}{\Hp}\Gg^{\Hp}$.
To streamline notations and emphasise nice symmetries, we introduce the notations
\begin{equation}\label{eq:niceGFO}
\Gf^{-} := \Gg^{\Hm}
\qquad\text{and}\qquad
\Gf^{+} := \frac{1}{\Hp}\Gg^{\Hp}.
\end{equation}

From the properties of GFO~\cite[Proposition~1.2]{horvath2024functional},
then $\Gfp\lambda\in\mathcal{C}^{\beta+H_+}$ 
as soon as $\lambda \in \Lambda_{\beta,H}$.

\begin{corollary}[GFO representation]\label{cor:changeOfMeasure}
With the kernel~$\kf_{\Hm}$ in~\eqref{eq:kernelPower} and $\lambda \in \Lambda_{\beta,H}$, 
the system~\eqref{rough pricing Q} under the risk-neutral measure~$\QQ$ can be rewritten as
\begin{equation*}
\left\{\begin{array}{rl}
\displaystyle \frac{\D S_t}{S_t} & = \displaystyle r_t \D t + \sqrt{v_t} \D W^\QQ_t,\\
v_t & = \displaystyle  \psi\Big(t,(\Gfm Z^\QQ)(t) + (\Gfp\lambda)(t)\Big).
\end{array}\right.
\end{equation*}
\end{corollary}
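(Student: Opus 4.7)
The target system differs from~\eqref{rough pricing Q} only in the argument of $\psi$, so the whole statement reduces to establishing the pathwise identity
\begin{equation*}
\widehat{Y}_t + \int_0^t k(t,s)\lambda_s\,\D s \;=\; (\Gfm Z^\QQ)(t) + (\Gfp\lambda)(t),\qquad t\in\TT,
\end{equation*}
which I would verify term by term: the deterministic convolution against $\lambda$ should be matched with $\Gfp\lambda$, while the Riemann--Liouville stochastic integral against $Z^\QQ$ should be matched with $\Gfm Z^\QQ$.

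For the deterministic piece I would simply unwind Definition~\ref{def:GFO}. Observing that $\frac{\D}{\D t}\kf_{\Hp}(t-s) = \Hp\,\kf_{\Hm}(t-s)$, the first branch of~\eqref{eq:GFO} at $\alpha=\Hp$ yields
\begin{equation*}
(\Gfp\lambda)(t) = \frac{1}{\Hp}(\Gg^{\Hp}\lambda)(t) = \int_0^t(\lambda_s-\lambda_0)\,\kf_{\Hm}(t-s)\,\D s = \int_0^t \kf_{\Hm}(t-s)\,\lambda_s\,\D s,
\end{equation*}
where the last equality uses $\lambda_0=0$, which holds for $\lambda\in\Lambda_{\beta,H}$. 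Since $k(t,s) = \kf_{\Hm}(t-s)\ind_{\{s\le t\}}$ by Assumption~\ref{assu:General}(vi), this is exactly $\int_0^t k(t,s)\,\lambda_s\,\D s$.

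For the stochastic piece I need to identify $(\Gfm Z^\QQ)(t) = \frac{\D}{\D t}\int_0^t Z^\QQ_s\, \kf_{\Hm}(t-s)\,\D s$ with $\widehat{Y}_t = \int_0^t \kf_{\Hm}(t-s)\,\D Z^\QQ_s$. Two observations do the job. First, since $Z^\QQ$ is a $\QQ$-Brownian motion with $Z^\QQ_0=0$ and $\kf_{\Hp}(0)=0$, an integration by parts in the $s$ variable gives $\int_0^t Z^\QQ_s\,\kf_{\Hm}(t-s)\,\D s = \int_0^t \Kf(t,s)\,\D Z^\QQ_s$, recalling $\Kf(t,s) = \kf_{\Hp}(t-s)/\Hp$. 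Second, the stochastic Fubini theorem, applicable because $\kf_{\Hm}\in L^2_{\mathrm{loc}}(\TT)$ (as $\Hm>-\half$), applied to $\int_0^t \widehat{Y}_u\,\D u = \int_0^t\int_0^u \kf_{\Hm}(u-s)\,\D Z^\QQ_s\,\D u$ after swapping orders, gives $\int_0^t \widehat{Y}_u\,\D u = \int_0^t\Kf(t,s)\,\D Z^\QQ_s$ as well. Comparing these two identities and differentiating in $t$ via the fundamental theorem of calculus then produces $\widehat{Y}_t = (\Gfm Z^\QQ)(t)$.

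The main technical obstacle is the final differentiation, which rests on the pathwise continuity of $t\mapsto\widehat{Y}_t$; this follows from standard Kolmogorov continuity estimates for the Riemann--Liouville Gaussian process with kernel $\kf_{\Hm}$, $\Hm\in(-\half,0)$, but should be flagged because the kernel is singular at the origin in the rough regime $H<\half$. The stochastic Fubini step deserves some care for the same reason, and can be justified either by mollifying $\kf_{\Hm}$ and passing to the $L^2(\Omega)$-limit, or by invoking the Volterra version of stochastic Fubini directly.
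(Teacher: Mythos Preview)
Your proposal is correct and follows the same two-step decomposition as the paper: both you and the authors match the deterministic convolution with $\Gfp\lambda$ via the first branch of Definition~\ref{def:GFO} together with $\lambda_0=0$, and both identify $\widehat{Y}_t$ with $(\Gfm Z^\QQ)(t)$. The only difference is one of detail: for the stochastic piece the paper simply invokes \cite[Proposition~1.4]{horvath2024functional}, whereas you supply a self-contained argument (integration by parts plus stochastic Fubini plus differentiation), which is precisely the content of that cited proposition.
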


\begin{proof}
The fact that 
$\displaystyle
\int_{0}^\cdot \kf_{\Hm}(\cdot-s)\D Z^\QQ_s = \Gfm Z^\QQ$
is straightforward by the properties of GFO in~\cite[Proposition 1.4]{horvath2024functional}. 
Furthermore, for any $\lambda \in \Lambda_{\beta,H}$ and any $t\in\TT$,
$$
\int_0^t \kf_{\Hm}(t-s) \lambda_s \D s
= \int_0^t \frac{\D}{\D t}\Kf(t,s) (\lambda_s-\lambda_0) \D s
= \left(\Gfp\lambda\right)(t).
$$
\end{proof}
Note that since $\Gfp\lambda\in\mathcal{C}^{\beta+H_+}$, then the risk premium has sample paths with H\"older regularity greater than $\half$, regardless of the value of~$H$.

\section{Modelling the risk premium process: A practical approach}
\label{sec:Practical}

In practice, the process~$\lambda$ is directly modelled without resorting to a change of measure starting from~$\gamma$. 
We now consider different modelling choices for the risk premium~$\lambda$ and analyse some of its properties.
In spite of the formal derivation of Theorem~\ref{thm:changeOfMeasure}, a numerical treatment of the integral $\int_{0}^t \bullet \; \D s $ is rather intricate. To overcome this issue, Bayer, Friz and Gatheral~\cite{BFG15} elegantly came up with the forward variance form of rough volatility in the spirit of Bergomi~\cite{Bergomi}.
We shall restrict ourselves to this functional form {(defined below in~\eqref{rough pricing Q lognormal})} for the remainder of the section.
Consider~\eqref{rough pricing Q} with $\psi(t,x)=\xi_0(t)\E^{\nu x}$, 
with $\xi_0(t):=\EE[v_t|\mathcal{F}_0]$ and $\nu>0$.
Then the risk-neutral dynamics in forward variance form read
\begin{equation}\label{rough pricing Q lognormal}
\left\{
\begin{array}{rl}
\displaystyle \frac{\D S_t}{S_t} & =  \displaystyle r_t \D t + \sqrt{v_t} \,\D W^\QQ_t,\\
v_t & = \displaystyle \xi_0(t)\exp\left(
\nu\left(\int_{0}^{t} \kf_{\Hm}(t-s)\D Z^\QQ_s
 + \int_{0}^{t} \kf_{\Hm}(t-s) \lambda_s \D s\right)\right ).
\end{array}
\right.
\end{equation}
In the remaining of this section, 
the process~$X^{\QQ}$ will denote a $\QQ$-Brownian motion possibly correlated 
with~$W^{\QQ}$ and~$Z^{\QQ}$.

\subsection{Risk premium driven by It\^o diffusion} 

Generalised Fractional Operators provide a natural framework to model risk premium processes driven by diffusions. 
The statement below shows the details of such a construction.
Recall that the Beta function is defined as 
$\Bf(x,y) := \int_{0}^{1}s^{x-1}(1-s)^{y-1}\D s$, for $x,y>0$.

\begin{proposition}\label{SDE lambda}
For $H \in (0,\half)$, 
$\alpha\in\left(-\half,0\right)$, 
let
$\lambda := \bbf\Gg^{\alpha}Y^{\QQ} \in\mathcal{C}^{\alpha+\half}$
with $\bbf := \Bf(H_+, \alpha +1)^{-1}$
and 
$
Y^{\QQ}_t = \int_{0}^{t}b(s,Y^{\QQ}_s)\D s + \int_{0}^{t}\sigma(s,Y^{\QQ}_s)\D X^\QQ_s$,
where $b(\cdot)$ and $\sigma(\cdot,\cdot)$ satisfy the Yamada-Watanabe conditions~\cite[Section~5.2, Proposition 2.13]{karatzas2012brownian} 
for pathwise uniqueness ensuring a weak solution. 
Then $\Gg^{\alpha+ H_+}Y^{\QQ}\in\mathcal{C}^{H+\alpha+1}$ and 
\begin{equation}\label{eq: operator representation}
v_t=\xi_0(t)\exp\Big\{ \nu \left((\Gfm Z^{\QQ})(t)
+(\Gg^{\alpha+ H_+}Y^{\QQ})(t)\right) \Big\}.
\end{equation}
Furthermore, if $Y^{\QQ} = X^\QQ$ and $\D\langle Y^{\QQ}, Z^\QQ\rangle_t = \rho\,\D t$ with $\rho\leq 0$, then
\begin{align}\label{eq:op_rep_cond_V}
\EE^{\QQ}\left[v_t|\mathcal{F}_s\right]
&=\xi_0(t)\exp\Big\{ \nu \left[\left(\Gfm Z^{\QQ}\right)(s,t)
+\left(\Gg^{\alpha+ H_+}X^{\QQ}\right)(s,t)\right]\Big\} \\
&\times\exp\left\lbrace\frac{\nu^2}{2}\left(\frac{\kf_{2H}(t-s)}{2H}+\frac{\kf_{2(H+1)}(t-s)}{2H_+^2 (H+1)} + \rho \frac{\kf_{2\Hp}(t-s)}{H_+^2} \right) \right\rbrace,
\end{align}
where
$\left(\Gg^{\Hm} Z^{\QQ}\right)(s,t) := \int_{0}^{s} \kf_{\Hm}(t-u) \D Z^\QQ_u$ for $0\leq s\leq t$,
and similarly for $\left(\Gg^{\alpha+ H_+}X^{\QQ}\right)(s,t)$.
\end{proposition}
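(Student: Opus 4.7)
The proposition bundles three claims: a sample-path regularity assertion for $\Gg^{\alpha+H_+}Y^{\QQ}$, the forward-variance representation~\eqref{eq: operator representation}, and the explicit conditional expectation~\eqref{eq:op_rep_cond_V}. My overall strategy is to translate every generalised fractional operator acting on a semimartingale into an ordinary It\^o integral (mirroring the device already used in Corollary~\ref{cor:changeOfMeasure}), after which the three claims reduce to textbook calculus: a H\"older lifting property, a Beta function identity, and the Gaussian moment generating function. For the regularity claim, the Yamada-Watanabe hypotheses guarantee that the paths of $Y^{\QQ}$ are H\"older continuous of every index $\beta<\tfrac12$; the regularity part of~\cite[Proposition~1.2]{horvath2024functional} then lifts this to $\mathcal{C}^{\beta+\alpha+H_+}$, and letting $\beta\uparrow\tfrac12$ yields the claimed $\mathcal{C}^{H+\alpha+1}$.

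For~\eqref{eq: operator representation}, I would invoke~\cite[Proposition~1.4]{horvath2024functional} (which already drove the proof of Corollary~\ref{cor:changeOfMeasure}) to rewrite $(\Gg^{\alpha}Y^{\QQ})(s)=\int_0^s(s-u)^{\alpha}\D Y^{\QQ}_u$ for $\alpha\in(-\tfrac12,0)$. Substituting $\lambda_s=\bbf\,(\Gg^{\alpha}Y^{\QQ})(s)$ into the inner integral of~\eqref{rough pricing Q lognormal} and exchanging the order of integration via a stochastic Fubini argument reduces the deterministic kernel to the Beta integral
$$
\int_u^t (t-s)^{H-\tfrac12}(s-u)^{\alpha}\D s
=\Bf(H_+,\alpha+1)\,(t-u)^{\alpha+H_+},
$$
whose normalising constant is exactly $\bbf^{-1}$. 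The resulting $\int_0^t(t-u)^{\alpha+H_+}\D Y^{\QQ}_u$ is then recognised, via the same representation with positive parameter $\alpha+H_+$, as $(\Gg^{\alpha+H_+}Y^{\QQ})(t)$; plugging this back into~\eqref{rough pricing Q lognormal} produces~\eqref{eq: operator representation}.

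For the conditional expectation under $Y^{\QQ}=X^{\QQ}$, the plan is to split each stochastic integral at time $s$: write $(\Gfm Z^{\QQ})(t)=(\Gfm Z^{\QQ})(s,t)+A$ with $A=\int_s^t(t-u)^{H-\tfrac12}\D Z^{\QQ}_u$, and similarly $(\Gg^{\alpha+H_+}X^{\QQ})(t)=(\Gg^{\alpha+H_+}X^{\QQ})(s,t)+B$ with $B=\int_s^t(t-u)^{\alpha+H_+}\D X^{\QQ}_u$. The $(s,t)$-terms are $\Ff_s$-measurable and factor out of the conditional expectation, while $A+B$ is independent of $\Ff_s$ and centred Gaussian, so the Laplace transform of the normal law gives
$$
\EE^{\QQ}\left[\exp\{\nu(A+B)\}\,\middle|\,\Ff_s\right]
=\exp\left\lbrace\frac{\nu^2}{2}\bigl(\mathrm{Var}(A)+\mathrm{Var}(B)+2\,\mathrm{Cov}(A,B)\bigr)\right\rbrace.
$$
It\^o's isometry then evaluates each variance/covariance as a closed-form power of $(t-s)$ (the cross term carrying the correlation through $\D\langle Z^{\QQ},X^{\QQ}\rangle_t=\rho\,\D t$), reproducing the three summands of~\eqref{eq:op_rep_cond_V}. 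The one genuinely delicate step will be justifying the stochastic Fubini swap, since the kernel $(t-s)^{H-\tfrac12}(s-u)^{\alpha}$ is singular both on the diagonal $s=u$ and as $s\uparrow t$; verifying the square-integrability hypothesis of (e.g.) Veraar's stochastic Fubini theorem is where essentially all of the technical work lies, while everything downstream is algebraic bookkeeping.
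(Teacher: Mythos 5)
Your proposal matches the paper's proof essentially step for step: a stochastic Fubini swap followed by the Beta-integral identity $\int_u^t (t-s)^{\Hm}(s-u)^{\alpha}\,\D s = \Bf(\alpha+1,\Hp)\,\kf_{\alpha+\Hp}(t-u)$ to cancel the $\bbf$ normalisation, and then a split-at-$s$ Gaussian moment-generating-function argument with It\^o's isometry supplying the variance and cross terms. The only caveat --- shared with the paper, which simply cites~\cite[Proposition~1.2]{horvath2024functional} --- is that the H\"older lifting with $\beta\uparrow\half$ strictly yields $\Gg^{\alpha+\Hp}Y^{\QQ}\in\mathcal{C}^{\eta}$ for every $\eta<H+\alpha+1$ rather than membership at the endpoint, while your explicit note that the stochastic Fubini integrability hypotheses need checking is a useful refinement that the paper passes over silently.
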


\begin{proof}
We first prove~\eqref{eq: operator representation},
which follows from~\cite[Proposition 1.2]{horvath2024functional} and the identities highlighted above in Corollary~\ref{cor:changeOfMeasure}. 
Indeed, in view of Corollary~\ref{cor:changeOfMeasure} we only need to show that
$\int_0^t \kf_{\Hm}(t-s) \lambda_s \D s 
= (\Gg^{\alpha+ H_+}Y^{\QQ})(t)$.
Replacing the expression for~$\lambda$ in the integral and using stochastic Fubini theorem, we obtain
\begin{align*}
\int_0^t \kf_{\Hm}(t-s) \lambda_s \D s 
 = \bbf \int_0^t \kf_{\Hm}(t-s) (\Gg^{\alpha}Y^{\QQ})(s)  \D s
 & = \bbf \int_0^t \kf_{\Hm}(t-s) 
 \int_0^s \kf_{\alpha}(s-u) \D Y^{\QQ}_u  \,\D s \\
 & = \bbf \int_0^t \int_u^t \kf_{\Hm}(t-s) \kf_{\alpha}(s-u) \D s \, \D Y^{\QQ}_u.
\end{align*}
Now, direct computations for the inner integral yield
$$
\int_{u}^{t}\kf_{\Hm}(t-s) \kf_{\alpha}(s-u)  \D s
 = \kf_{\alpha+\Hp}(t-u)
 \int_{0}^{1} (1-s)^{H_-} s^\alpha  \D s
 = \Bf(\alpha+1, \Hp) \kf_{\alpha+\Hp}(t-u).
$$
Therefore
$$
\int_0^t \kf_{\Hm}(t-s) \lambda_s \D s 
 = \bbf \int_0^t \Bf(\alpha+1, \Hp) \kf_{\alpha+\Hp}(t-u) \D Y^{\QQ}_u
 = \int_0^t \kf_{\alpha+\Hp}(t-u) \D Y^{\QQ}_u
 =(\Gg^{\alpha+ H_+}Y^{\QQ})(t). 
$$
We now move to the proof of the identity~\eqref{eq:op_rep_cond_V}.
Exploiting the representation of~$v_t$ in this specific case and the measurability and independence properties of the Brownian increments,
\begin{align*}
\EE^{\QQ}\left[v_t|\mathcal{F}_s\right]
& = \xi_0(t)\EE^{\QQ}_s\left[\exp\left\lbrace \nu \left[\left(\Gg^{H_-}Z^{\QQ}\right)(t)
+\left(\Gg^{\alpha+ H_+}X^{\QQ}\right)(t)\right] \right\rbrace\right]\\
& = \xi_0(t)\EE^{\QQ}_s\left[\exp\left\lbrace \nu \left[\int_0^t \kf_{\Hm}(t-u) \D Z^\QQ_u
+\int_0^t \kf_{\alpha+\Hp}(t-u)\D X^{\QQ}_u\right] \right\rbrace\right]\\
& = \xi_0(t)\exp\left\lbrace \nu \left[\int_0^s \kf_{\Hm}(t-u) \D Z^\QQ_u
+\int_0^s \kf_{\alpha+\Hp}(t-u)\D X^{\QQ}_u\right] \right\rbrace\\
& \qquad \times \EE^{\QQ}_s\left[\exp\left\lbrace \nu \left[\int_s^t \kf_{\Hm}(t-u) \D Z^\QQ_u
+\int_s^t \kf_{\alpha+\Hp}(t-u)\D X^{\QQ}_u\right] \right\rbrace\right]\\
& = \xi_0(t)\exp\left\lbrace \nu \left[\int_0^s \kf_{\Hm}(t-u) \D Z^\QQ_u
+\int_0^s \kf_{\alpha+\Hp}(t-u)\D X^{\QQ}_u\right] \right\rbrace\\
& \qquad \times\exp\left\lbrace\frac{\nu^2}{2}\left(\int_s^t \kf_{2\Hm}(t-u) \D u +\int_s^t \frac{\kf_{2H+1}(t-u)}{H_+^2} \D u+\rho\int_s^t \frac{\kf_{2H}(t-u)}{H_+} \D u \right) \right\rbrace\\
& = \xi_0(t)\exp\left\lbrace \nu \left[\int_0^s \kf_{\Hm}(t-u) \D Z^\QQ_u
+\int_0^s \kf_{\alpha+\Hp}(t-u)\D X^{\QQ}_u\right] \right\rbrace\\
& \qquad \times\exp\left\lbrace\frac{\nu^2}{2}\left(\frac{\kf_{2H}(t-s)}{2H}+\frac{\kf_{2(H+1)}(t-s)}{2H_+^2 (H+1)} + \rho \frac{\kf_{2\Hp}(t-s)}{H_+^2} \right) \right\rbrace.
\end{align*}
Thus we only have to show that 
$$
\left(\Gg^{H_-}Z^{\QQ}\right)(s,t) = \int_0^s \kf_{\Hm}(t-u) \D Z^\QQ_u
\qquad\text{and}\qquad
\left(\Gg^{\alpha+ H_+}X^{\QQ}\right)(s,t) = \int_0^s \kf_{\alpha+\Hp}(t-u)\D X^{\QQ}_u.
$$
We prove the first identity, the second being analogous. 
It is a straightforward consequence of the definitions and the properties of Brownian increments:
\begin{align*}
\left(\Gg^{H_-}Z^{\QQ}\right)(s,t) 
 := \EE^\QQ_s\left[
 \int_{0}^{t} \kf_{\Hm}(t-u) \D Z^\QQ_u\right]
 & = \int_0^s \kf_{\Hm}(t-u)\D Z^\QQ_u + \EE^\QQ_s \left[ 
 \int_{s}^{t} \kf_{\Hm}(t-u) \D Z^\QQ_u\right]\\
 & = \int_0^s \kf_{\Hm}(t-u) \D Z^\QQ_u.
\end{align*}
\end{proof}

\begin{remark}
Since the instantaneous variance in this model is log-Normal, the results in~\cite[Proposition~3.1]{jacquier2018vix} and numerical methods therein still apply for the VIX with minimal changes.
\end{remark}

\subsection{A risk premium driven by a CIR process}

A second natural choice is to consider the Cox-Ingersoll-Ross (CIR) process
\begin{align}\label{eq:CIR_YQ}
    \D Y^{\QQ}_s = \kappa(\theta -Y^{\QQ}_s) \D s+\sigma \sqrt{Y^{\QQ}_s}\ \D X^{\QQ}_s,
\end{align}
with $\kappa, \theta, \sigma>0$.
As tempting as this approach might seem, it is not trivial at all to compute the basic quantity $\EE^\QQ[v_t]$ here, 
as the following proposition shows.

\begin{proposition}\label{prop: MGF fCir}
Assume that the Brownian motions $Z^{\QQ}$ and $X^{\QQ}$ are independent and consider 
$\lambda = \Gg^{\alpha}Y^{\QQ} \in\mathcal{C}^{\alpha+\half}$, with $Y^\QQ$ defined in~\eqref{eq:CIR_YQ}. 
Then, for any $s\leq t$,
\begin{align*}
 \EE_{s}^{\QQ}[v_t]
 = \xi_0(t)& \exp\Big\{ \nu \left[\left(\Gfm Z^{\QQ}\right)(s,t)
+\left(\Gg^{\alpha+ H_+}Y^{\QQ}\right)(s,t)\right] \Big\}\\
 & \exp\left\{\frac{\nu^2}{2}\int_{s}^{t} \kf_{\Hm}(t-u)^2\D u
-Y^{\QQ}_s C(s,T) - A(s,T)\right\},
\end{align*}
where $A(t,T) := -\kappa\theta \int_t^TC(u,T)\D u$ and~$C$ satisfies the Riccati equation
$$
\nu \kf_{\Hm}(T,t) - \pd_{t}C(t,T) + C(t,T)\theta+\frac{\sigma^2}{2}C^2(t,T) = 0, 
$$
for $t \in [0,T)$, 
with boundary condition $C(T,T)=0$.
\end{proposition}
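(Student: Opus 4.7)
The plan is to factorise $\EE^{\QQ}_s[v_t]$ across the two independent Brownian drivers $Z^\QQ$ and $X^\QQ$, evaluate the Gaussian contribution in closed form, and handle the CIR contribution via classical affine/Riccati machinery after an integration by parts that exposes the affine structure of $Y^\QQ$. Starting from Proposition~\ref{SDE lambda}, I would write $v_t = \xi_0(t)\exp\{\nu[(\Gfm Z^\QQ)(t)+(\Gg^{\alpha+\Hp}Y^\QQ)(t)]\}$, split each stochastic integral into its $\mathcal{F}_s$-measurable part on $[0,s]$ (yielding the $(\Gfm Z^\QQ)(s,t)$ and $(\Gg^{\alpha+\Hp}Y^\QQ)(s,t)$ contributions appearing in the statement) and its future increment on $[s,t]$, and use the independence of $Z^\QQ$ and $X^\QQ$ to factor $\EE^\QQ_s[v_t]$ as the $\mathcal{F}_s$-measurable exponential prefactor times $I_Z := \EE^\QQ_s[\exp\{\nu\int_s^t\kf_{\Hm}(t-u)\D Z^\QQ_u\}]$ times an analogous CIR factor $I_Y$. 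Since the $[s,t]$ increment of $\int \kf_{\Hm}\D Z^\QQ$ is independent of $\mathcal{F}_s$ and centred Gaussian with variance $\int_s^t\kf_{\Hm}(t-u)^2\D u$, $I_Z$ collapses immediately to $\exp\{\frac{\nu^2}{2}\int_s^t\kf_{\Hm}(t-u)^2\D u\}$, matching the first term of the second exponential in the statement.

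For $I_Y$, the key observation is that $\alpha+\Hp\in(0,1)$, so integration by parts on $[s,t-\eps]$ followed by $\eps\downarrow 0$ (the boundary term at $u=t$ vanishes) yields
$$
\int_s^t\kf_{\alpha+\Hp}(t-u)\D Y^\QQ_u = -(t-s)^{\alpha+\Hp}Y^\QQ_s + (\alpha+\Hp)\int_s^t(t-u)^{\alpha+\Hp-1}Y^\QQ_u\D u,
$$
which turns the $\D Y^\QQ$-integral into an $\mathcal{F}_s$-linear boundary term plus an affine functional of the CIR path on $[s,t]$. By the Markov property and Feynman-Kac, the conditional MGF $\Phi(s,y):=\EE^\QQ[\exp\{\nu(\alpha+\Hp)\int_s^t(t-u)^{\alpha+\Hp-1}Y^\QQ_u\D u\}\mid Y^\QQ_s=y]$ solves the associated Kolmogorov backward PDE with terminal condition $\Phi(t,\cdot)=1$; the ansatz $\Phi(s,y)=\exp\{-A(s,t)-C(s,t)y\}$ reduces this, by matching coefficients of $y^0$ and $y^1$, to a scalar Riccati ODE for $C(\cdot,t)$ with $C(t,t)=0$ together with the quadrature $A(s,t)=-\kappa\theta\int_s^t C(u,t)\D u$. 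Reassembling $\xi_0(t)$, the $\mathcal{F}_s$-measurable prefactor, $I_Z$, the boundary factor $\exp\{-\nu(t-s)^{\alpha+\Hp}Y^\QQ_s\}$ coming from the integration by parts, and $\Phi(s,Y^\QQ_s)$ -- with the linear boundary term absorbed into the definition of $C$ -- then produces the identity in the statement.

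I expect the main obstacle to be the singularity of the Riccati source $\nu(\alpha+\Hp)(t-\cdot)^{\alpha+\Hp-1}$, whose exponent lies in $(-1,0)$: one must verify both that the scalar Riccati ODE is non-explosive on $[0,t]$ (by a standard comparison argument exploiting the integrability of the singular forcing) and that the Feynman-Kac representation remains valid for such a time-singular driver (handled by a truncation at $t-\eps$ together with a dominated-convergence passage to the limit in the terminal-value problem).
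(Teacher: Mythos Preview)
Your overall strategy---factorise via the independence of $Z^\QQ$ and $X^\QQ$, compute the Gaussian piece as an MGF, and handle the CIR piece through Feynman--Kac with an affine ansatz leading to a Riccati ODE---is exactly the route taken in the paper. The paper does not spell out the integration by parts; it simply asserts that $(\Gg^{\alpha+\Hp}Y^\QQ)(t)-(\Gg^{\alpha+\Hp}Y^\QQ)(u,t)$ equals a $\int_u^t(\cdot)\,Y^\QQ_s\,\D s$ integral (this is, up to the boundary term you correctly isolate, precisely what the GFO Definition~\ref{def:GFO} encodes for positive exponents), and then applies Feynman--Kac to that functional. So your explicit integration by parts is a cleaner justification of the same step rather than a different method.

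One point to tighten: your final remark that the boundary contribution $-\nu(t-s)^{\alpha+\Hp}Y^\QQ_s$ is ``absorbed into the definition of~$C$'' so as to recover \emph{exactly} the stated Riccati does not go through cleanly. If you set $C(s,t)=\nu(t-s)^{\alpha+\Hp}+\widetilde C(s,t)$ with $\widetilde C$ solving the Riccati with source $\nu(\alpha+\Hp)(t-\cdot)^{\alpha+\Hp-1}$, then substituting back yields cross terms $\kappa\nu(t-s)^{\alpha+\Hp}$ and $\frac{\sigma^2}{2}(\,\cdot\,)^2$ expansions that do \emph{not} reduce to the Riccati displayed in the statement. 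The honest output of your computation is an affine representation $\exp\{-A(s,t)-\widetilde C(s,t)Y^\QQ_s\}$ multiplied by the boundary factor, with $\widetilde C$ driven by the singular source you identify; this is equivalent in content to the paper's formula but with different (and, in fact, more internally consistent) constants in the ODE. The paper's own Riccati already shows a sign mismatch on the linear term between statement and proof and writes the source as $\nu\kf_{\Hm}$ rather than the $(\alpha+\Hp)\kf_{\alpha+\Hm}$ your derivation produces, so you should not force an exact match---state your Riccati and quadrature for~$A$ as they actually come out of Feynman--Kac. Your closing remarks on non-explosion of the singular Riccati and validity of Feynman--Kac are appropriate caveats that the paper does not address.
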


\begin{proof}
By independence of the driving Brownian motions we have, 
for any $u\leq t$,
\begin{align*}    \EE[v_t|\Ff_u]
 & = \xi_0(t)\exp\left\lbrace \nu \left(\left(\Gfm Z^{\QQ}\right)(u,t)
+\left(\Gg^{\alpha+\Hp}Y^{\QQ}\right)(u,t)\right) \right\rbrace\\
 & \qquad \times \EE\left[
 \exp\Big\{ \nu \left(\left(\Gfm Z^{\QQ}\right)(t)-\left(\Gfm Z^{\QQ}\right)(u,t)\right)\Big\}\right]\\
&\qquad \times \EE\left[\exp\Big\{\nu\left(\left(\Gg^{\alpha+\Hp}Y^{\QQ}\right)(t)-\left(\Gg^{\alpha+\Hp}Y^{\QQ}\right)(u,t)\right)\Big\} \right],
\end{align*}
where the first expected value is the MGF of a Gaussian random variable, hence
$$
\EE\left[\exp\Big\{ \nu \Big(\left(\Gfm Z^{\QQ}\right)(t)-\left(\Gfm Z^{\QQ}\right)(u,t)\Big)\Big\}\right]
 = \exp\left\{\frac{\nu^2}{2}\int_u^t \kf_{\Hm}(t,s)\D s\right\}.
$$
We are now interested in computing the second expectation
$$
\EE\left[\exp\Big\{\nu\left(\left(\Gg^{\alpha+\Hp}Y^{\QQ}\right)(t)- \left(\Gg^{\alpha+\Hp}Y^{\QQ}\right)(u,t)\right)\Big\}\right],
$$
where
$\displaystyle
\left(\Gg^{\alpha+\Hp}Y^{\QQ}\right)(t) - \left(\Gg^{\alpha+\Hp}Y^{\QQ}\right)(u,t)
 = \int_u^t \kf_{\Hm}(t,s)Y^{\QQ}_s \D s$.
This is, in spirit, similar to computing a bond price in the CIR model. To do so, define 
\begin{equation}\label{eq:MGF_B}
B(t,T)        
 := \EE\left[\exp\left(\nu\int_t^T \kf_{\Hm}(T,s) Y^{\QQ}_s\D s\right)\bigg|\Ff_t\right],
\end{equation}
where $t\leq T$. We note that $B(\cdot,T)$ is a semimartingale as $T$ is fixed, therefore applying the conditional version of Feynman-Kac's formula, we obtain 
\begin{equation}\label{eq:CIRPDE}
\left(\nu r \kf_{\Hm}(T,t) + \pd_t+ \kappa(\theta - y)\pd_r + \frac{\sigma^2}{2} r\pd_{yy}\right)\Bb (y,t,T) = 0.
\end{equation}
where $\Bb(y,t,T)$ is such that $\Bb(Y^{\QQ}_t,t,T) = B(t,T)$
given in~\eqref{eq:MGF_B}.
With an ansatz of the type $\Bb(y,t,T) = \exp\{-y C(t,T)-A(t,T)\}$, we have at $(y,t,T)$,
\begin{align*}
    & \pd_{t}\Bb(y,t,T) = -(y \pd_{t} C(t,T) + \pd_{t}A(t,T))\Bb(y,t,T),\\
    &\pd_{y}\Bb(y,t,T) = -C(t,T)\Bb(y,t,T),\qquad
\pd_{yy}\Bb(y,t,T) = C(t,T)^2 \Bb(y,t,T),
\end{align*}
and the PDE~\eqref{eq:CIRPDE} becomes, with $r=Y^{\QQ}_t$,
$$
\left(\nu Y^{\QQ}_t \kf_{\Hm}(T,t) - \left(Y^{\QQ} \pd_{t}C + \pd_{t}A + \kappa\left(\theta - Y^{\QQ}_t\right)C\right)
+ \frac{\sigma^2}{2}C^2 Y^{\QQ}_t \right) \Bb(Y^{\QQ}_t,t,T) = 0,
$$
which further simplifies to
$$
\left(\nu \kf_{\Hm}(T,t) - \pd_{t}C - \kappa C + \frac{\sigma^2}{2}C^2\right)Y^{\QQ}_t\Bb(Y^{\QQ}_t,t,T) - (\kappa\theta C + \pd_{t}A)\Bb(Y^{\QQ}_t,t,T) = 0.
$$
The last term cancels for
$A(t,T) = -\kappa\theta \int_t^TC(u,T)\D t$,
and the Riccati equation
$\nu \kf_{\Hm}(T,t) - \pd_{t}C(t,T) -\kappa C(t,T) + \frac{\sigma^2}{2} C^2(t,T) = 0$
remains, 
with $A(T,T)=C(T,T)=0$.
\end{proof}

Already in the uncorrelated case the computation of $\EE^\QQ[v_t]$ becomes very costly, having to solve a quadratic ODE (with time-dependent coefficients) for each time~$t$. 
In the correlated case there is no hope to obtain any semi-analytic result since one would need to compute cross terms and there is no tool coming from It\^o's calculus available in that case.
{
The approach in this section was essentially \textit{top-down}, 
meaning that we specified a form for the market price of volatility risk~$\lambda$ and deduced the shape of the model with this specification.
Unfortunately, our analysis showed that this may ultimately not be so successful as the final form of the model is rather intricate, probably too much so for practical purposes.
Alternatively, one may first infer some shape of~$\lambda$ from market data 
(short rate of interest, expected
returns and instantaneous volatility)
and then use it to price options under~$\QQ$.
}


\section{Roughly extracting the Risk Premium from the Market}
\label{sec:Extracting}

We now consider the risk premium process~$\lambda$  deterministic, 
and obtain a formula linking~$\PP$ and~$\QQ$ market observable quantities. 
The following theorem shows how to infer the risk premium from the market using forecasts 
under~$\PP$ and Variance Swap prices under~$\QQ$.

\begin{theorem}\label{thm: risk premium}
    Consider the rough volatility model~\eqref{rough vol GJR} under~$\PP$. 
    If $\psi(t,x)=\xi_0(t)\E^{\nu x}$, 
    $\mu_s=r_s$ for all $s\geq 0$ and 
    $\left(\lambda_s\right)_{s\geq 0}\in L^2(\RR)$ is deterministic, then
    \begin{equation}\label{eq:riskPremium}
        \nu\rrho\int_s^t \kf_{\Hm}(t,u)\gamma_u \D u=\log\left(\frac{\EE^\QQ[v_t|\Ff_s]}{\EE^\PP[v_t|\Ff_s]}\right)=\log\left(\frac{\xi_s(t)}{\EE^\PP[v_t|\Ff_s]}\right).
    \end{equation}
\end{theorem}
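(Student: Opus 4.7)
\medskip

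\noindent\textbf{Proof proposal.} The plan is to compute both $\EE^{\PP}[v_t|\Ff_s]$ and $\EE^{\QQ}[v_t|\Ff_s]$ in closed form using the log-Normal structure of the variance, and then divide. The decisive simplification comes from the assumption $\mu_s = r_s$, which makes the Sharpe ratio $\ssf_u = (r_u-\mu_u)/\sqrt{v_u}$ identically zero; hence $\lambda_u = \rho\ssf_u + \rrho\gamma_u = \rrho\gamma_u$, and the $W^{\PP}$ leg of the Radon-Nikodym density in~\eqref{eq:RadonNik} is trivial. Because $\lambda$ is deterministic and $\Lambda$-regular, $\int_0^t \kf_{\Hm}(t-u)\lambda_u\,\D u$ is a deterministic function of $t$, while $\int_0^t\kf_{\Hm}(t-u)\D Z^{\PP}_u$ and $\int_0^t\kf_{\Hm}(t-u)\D Z^{\QQ}_u$ are Gaussian under $\PP$ and $\QQ$ respectively, with identical (deterministic) conditional variance $\int_s^t\kf_{\Hm}(t-u)^2\,\D u$ given $\Ff_s$.

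First I would write down $v_t$ under each measure: under $\PP$, $v_t = \xi_0(t)\exp\{\nu Y_t\}$ with $Y_t = \int_0^t\kf_{\Hm}(t-u)\D Z^{\PP}_u$, and under $\QQ$ one uses~\eqref{rough pricing Q lognormal}. Splitting each stochastic integral as $\int_0^s + \int_s^t$ and applying the Gaussian MGF to the $\Ff_s$-independent tail, I obtain
\begin{align*}
\EE^{\PP}[v_t|\Ff_s] &= \xi_0(t)\exp\!\Big\{\nu\!\int_0^s\!\kf_{\Hm}(t-u)\D Z^{\PP}_u+\tfrac{\nu^2}{2}\!\int_s^t\!\kf_{\Hm}(t-u)^2\D u\Big\},\\
\EE^{\QQ}[v_t|\Ff_s] &= \xi_0(t)\exp\!\Big\{\nu\!\int_0^s\!\kf_{\Hm}(t-u)\D Z^{\QQ}_u+\nu\!\int_0^t\!\kf_{\Hm}(t-u)\lambda_u\D u+\tfrac{\nu^2}{2}\!\int_s^t\!\kf_{\Hm}(t-u)^2\D u\Big\}.
\end{align*}
The factor $\xi_0(t)$ and the Gaussian compensator $\tfrac{\nu^2}{2}\int_s^t\kf_{\Hm}(t-u)^2\D u$ are identical under the two measures and cancel cleanly in the ratio.

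What remains is to rewrite $\int_0^s\kf_{\Hm}(t-u)\D Z^{\QQ}_u$ in terms of $Z^{\PP}$ via the Girsanov relation~\eqref{eq:QBM}, producing an extra term $\pm\nu\int_0^s\kf_{\Hm}(t-u)\lambda_u\D u$ that combines with $\nu\int_0^t\kf_{\Hm}(t-u)\lambda_u\D u$ to leave exactly $\nu\int_s^t\kf_{\Hm}(t-u)\lambda_u\D u$ in the exponent. Substituting $\lambda_u = \rrho\gamma_u$ yields the left-hand side of~\eqref{eq:riskPremium}. The second equality in~\eqref{eq:riskPremium} is just the definition of the $\QQ$-forward variance $\xi_s(t) := \EE^{\QQ}[v_t|\Ff_s]$.

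The one subtle point, which is the only thing that needs care beyond the Gaussian bookkeeping, is verifying that the $\int_0^s$ stochastic integral aligns correctly under the change of measure so that the contributions over $[0,s]$ telescope and only the window $[s,t]$ survives; this is where the sign convention of Girsanov dictates the sign of the final formula and deserves to be checked explicitly.
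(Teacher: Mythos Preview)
Your proposal is correct and actually takes a more direct route than the paper. The paper proceeds via the conditional Bayes formula
\[
\EE^{\PP}[v_t|\Ff_s]=\frac{\EE^{\QQ}[v_t\,\Ddgi_t|\Ff_s]}{\EE^{\QQ}[\Ddgi_t|\Ff_s]},
\]
then computes the joint conditional moment generating function of the Gaussian pair $\bigl(\nu\!\int_0^t\kf_{\Hm}(t,u)\D Z^\QQ_u,\ -\!\int_0^t\gamma_u\D W^{\QQ,\perp}_u\bigr)$ under~$\QQ$, and decomposes the result as $\EE^{\QQ}[\Ddgi_t|\Ff_s]\,\EE^{\QQ}[v_t|\Ff_s]\,\exp\{-\nu\rrho\!\int_s^t\kf_{\Hm}(t,u)\gamma_u\D u\}$. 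Your approach instead computes $\EE^{\PP}[v_t|\Ff_s]$ and $\EE^{\QQ}[v_t|\Ff_s]$ separately under their native measures and divides, invoking Girsanov only pathwise on $[0,s]$ to turn $\int_0^s\kf_{\Hm}(t-u)\D Z^\QQ_u-\int_0^s\kf_{\Hm}(t-u)\D Z^\PP_u$ into $\int_0^s\kf_{\Hm}(t-u)\lambda_u\D u$. This bypasses the Bayes machinery and the cross term with the density entirely; the paper's route has the minor advantage of working exclusively under~$\QQ$, but at the cost of a longer Gaussian bookkeeping. Your flagged ``subtle point'' about the sign is exactly where the paper's own sign conventions (compare~\eqref{eq:QBM} with~\eqref{rough pricing Q} and~\eqref{rough pricing Q lognormal}) need to be reconciled, so checking it explicitly is warranted.
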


\begin{proof}

If $\mu=r$ almost surely, the Radon-Nikodym derivative~\eqref{eq:RadonNik} in Theorem~\ref{thm:changeOfMeasure} reads
$\Ddg = \Ee\left(\int_{0}^{\cdot}\gamma_s \D W^{\PP\perp}_s \right)$,
with $\lambda_s = \rrho\gamma_s$, and the inverse Radon-Nikodym derivative is given by
$\Ddgi := \frac{1}{\Ddg} = \Ee\left(-\int_{0}^{\cdot}\gamma_s \D W^{\QQ\perp}_s \right)$.
Then, the conditional change of measure formula yields	
    \begin{equation}\label{eq: Conditional Girsanov}
        \EE^\PP[v_t|\Ff_s]
        =\frac{\EE^\QQ\left[v_t\Ddgi_t |\Ff_s\right]}{\EE^\QQ\left[\Ddgi_t |\Ff_s\right]}.
    \end{equation}
    On the one hand,    $\EE^\QQ\left[\Ddgi_t|\Ff_s\right]=\Ee\left(-\int_{0}^{\cdot}\gamma_u \D W^{\QQ\perp}_u\right)_s$
    by the properties of the  stochastic exponential and Gaussian moment generating functions. 
    On the other hand, since, for $t\in \TT$, $Z^{\QQ}_t = Z^{\PP}_t + \int_{0}^t \lambda_s \D s$ and~$\lambda$ is deterministic,  then
\begin{align}\label{eq:vtp-q}
        \EE^\QQ\left[v_t\Ddgi_t |\Ff_s\right] 
        & =\EE^\QQ\left[\left.\exp\left\{\nu\left(\int_{0}^t \kf_{\Hm}(t,u)\D Z^\QQ_u+\int_{0}^t \lambda_u \kf_{\Hm}(t,u) \D u\right)\right\}
        \E^{-\int_{0}^t\gamma_u \D W^{\QQ\perp}_u -\half\int_{0}^t\gamma_u^2 \D u}\right|\Ff_s\right]\\
        & = \E^{\nu \int_{0}^t \lambda_u \kf_{\Hm}(t,u) \D u -\half\int_{0}^t\gamma_u^2 \D u}\EE^\QQ\left[\left.\exp\left\{\nu\int_{0}^t \kf_{\Hm}(t,s)\D Z^\QQ_s -\int_{0}^t\gamma_s \D W^{\QQ\perp}_s \right\}\right|\Ff_s\right],
    \end{align}
    where the second factor in the last term is just the conditional moment generating function of a Gaussian random variable. 
    Applying It\^o's isometry then, conditionally on~$\Ff_s$,
    the random variable 
    $\nu\int_{0}^t \kf_{\Hm}(t,s)\D Z^{\QQ}_s-\int_{0}^t\gamma_s \D W^{\QQ\perp}_s$is distributed as $\mathcal{N}\left(\mu,\sigma^2\right)$
    with
    \begin{align*}
        \mu & := \nu\int_{0}^s \kf_{\Hm}(t,u)\D Z^{\QQ}_u-\int_{0}^s\gamma_u \D W^{\QQ\perp}_u,\\
        \sigma^2 & := \nu^2\int_{s}^tk^2(t,u)\D u+\int_s^t \gamma_u^2 \D u -2\nu\rrho\int_s^t \kf_{\Hm}(t,u)\gamma_u \D u,
    \end{align*}
    since $Z^{\QQ} = \rho W^{\QQ}+\rrho W^{\QQ\perp}$. 
    Exploiting the identities above and reordering terms,
    \begin{align}
        \EE^\QQ\left[v_t\Ddgi_t|\Ff_s\right]  \
        & = \exp\Bigg\lbrace\nu\int_{0}^t \kf_{\Hm}(t,u)\lambda_u \D u +\overbrace{\nu\int_{0}^s \kf_{\Hm}(t,u)\D Z^{\QQ}_u-\int_{0}^s\gamma_u \D W^{\QQ\perp}_u}^{\mu}\\ \nonumber
        & \qquad  +\half\Bigg(\underbrace{\nu^2\int_{s}^tk^2(t,u)\D u+\int_s^t \gamma_u^2 \D u -2\nu\rrho\int_s^t \kf_{\Hm}(t,u)\gamma_u \D u}_{\sigma^2}-\int_{0}^t\gamma_u^2 \D u\Bigg)\Bigg\rbrace \\ \nonumber
        & = \exp\Bigg\lbrace -\nu\rrho\int_{s}^t \kf_{\Hm}(t,u)\gamma_u \D u \Bigg \rbrace 
        \overbrace{\exp\Bigg\lbrace -\int_{0}^s\gamma_u \D W^{\QQ\perp}_u  - \half\int_{0}^s\gamma_u^2 \D u\Bigg\rbrace}^{\EE^\QQ\left[\Ddgi_t|\Ff_s\right]}\\ \nonumber
        & \qquad \underbrace{+ \exp\Bigg\lbrace\nu\int_{0}^s \kf_{\Hm}(t,u)\D Z^{\QQ}_u +\frac{\nu^2}{2}\int_{s}^t \kf_{\Hm}^2(t,u)\D u + \nu\int_{0}^t \kf_{\Hm}(t,u)\lambda_u \D u  \Bigg\rbrace}_{\EE^\QQ[v_t|\Ff_s]},
    \end{align}
by using the decomposition of $\sigma^2$ as the sum of three terms, and so
\begin{align}\label{eq_dec_exps}
\EE^\QQ\left[v_t\Ddgi_t |\Ff_s\right] = \EE^\QQ\left[\Ddgi_t|\Ff_s\right]\EE^\QQ[v_t|\Ff_s]\exp\left\{-\nu\rrho\int_s^t \kf_{\Hm}(t,u)\gamma_u \D u\right\}.
\end{align}
Finally, going back to~\eqref{eq: Conditional Girsanov} and exploiting the identity in~\eqref{eq_dec_exps}, 
the result follows from
\begin{align}
\EE^\PP[v_t|\Ff_s]
 = \EE^\QQ[v_t|\Ff_s]\exp\left\{-\nu\rrho\int_s^t \kf_{\Hm}(t,u)\gamma_u \D u\right\}
 = \EE^\QQ[v_t|\Ff_s]\exp\left\{-\nu\int_s^t \kf_{\Hm}(t,u)\lambda_u \D u\right\}.
\end{align}
\end{proof}

\subsection{Estimating the risk premium in rough Bergomi}

In this section we work with the 
rough Bergomi model under~$\PP$ and its $\QQ$-version:
\begin{equation*}
\begin{array}{cl}
(\text{under }\PP) & \left\{
    \begin{array}{rl}
    \displaystyle \frac{\D S_t}{S_t} & =  \displaystyle \mu_t \D t + \sqrt{v_t}\, \D W^\PP_t,\\
    v_t & = \displaystyle \exp\left\{\nu Z^H_t\right\}, \\
    \end{array}
    \right.
\\
\\
(\text{under }\QQ) & \left\{
    \begin{array}{rl}
    \displaystyle \frac{\D S_t}{S_t} & =  \displaystyle r_t \D t + \sqrt{v_t}\, \D W^\QQ_t,\\
    v_t & = \displaystyle \xi_0(t)\exp\left\{\nu \left(\int_0^t \kf_{\Hm}(t-u) \D Z^\QQ_u+\int_0^t \kf_{\Hm}(t-u)\lambda_u \D u\right)\right\}.
    \end{array}
    \right.
\end{array}
\end{equation*}
Assuming~$\lambda$ deterministic, Theorem~\ref{thm: risk premium} gives an explicit procedure to compute the risk premium. 
In practice however, we are only able to observe variance swap quotes in discrete times, and hence it is natural to consider~$\lambda$ piecewise constant.

\begin{assumption}
Given a time partition $\{0=T_0<T_1,...,<T_n=T\}$, 
the deterministic process~$\lambda$ admits the piecewise constant representation 
\begin{equation}\label{eq:lambdaDis}        \lambda(t):=\sum_{i=1}^n \lambda_i\ind_{\{t\in[T_{i-1},T_i)\}},\quad \lambda_i\in\mathbb{R} \text{ for } i=1,\ldots,n.
    \end{equation}
Similarly the forward variance admits the  piecewise constant representation 
$\EE_{0}^{\QQ}[v_t] = \xi_0(t) := \sum_{i=1}^n \xi_i\ind_{\{t\in[T_{i-1},T_i)\}}$
with $\xi_i\in\RR$ 
for $i=1,\ldots,n$,
where 
    $\xi_i
        := \frac{\Vv_{T_i}T_i - \Vv_{T_{i-1}}T_{i-1}}{T_i-T_{i-1}}$
    and $\Vv_{T} := \EE^\QQ\left[\frac{1}{T}\int_0^{T} v_s \D s \right]$ is a market variance swap quote. 
\end{assumption}

We now estimate $\{\lambda_1,\cdots,\lambda_n\}$. 
The dataset consists of daily Eurostoxx variance swap quotes for maturities $\{\text{1M, 3M, 6M, 1Y, 2Y}\}$ 
(Figures~\ref{fig:VarSwap} and~\ref{fig:forwardVars}), 
while Figure~\ref{fig:realised_variance} shows the daily realised volatility obtained from Oxford-Man institute data.

\begin{figure}[H]
    \centering
    \includegraphics[scale=0.3]{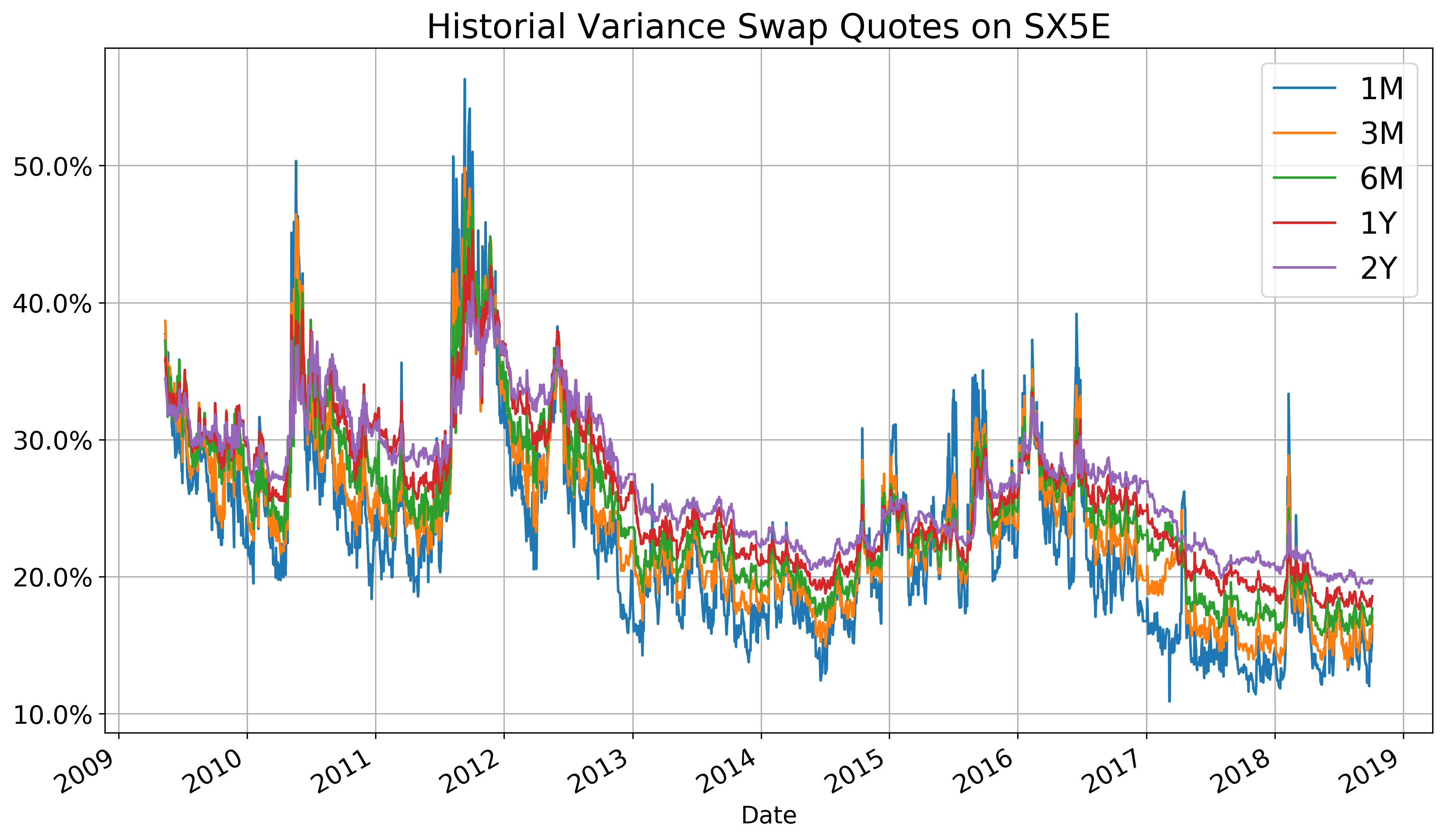}
    \caption{Variance Swap volatility daily quotes on SX5E}
    \label{fig:VarSwap}
\end{figure}

\begin{figure}[H]
    \centering
    \includegraphics[scale=0.3]{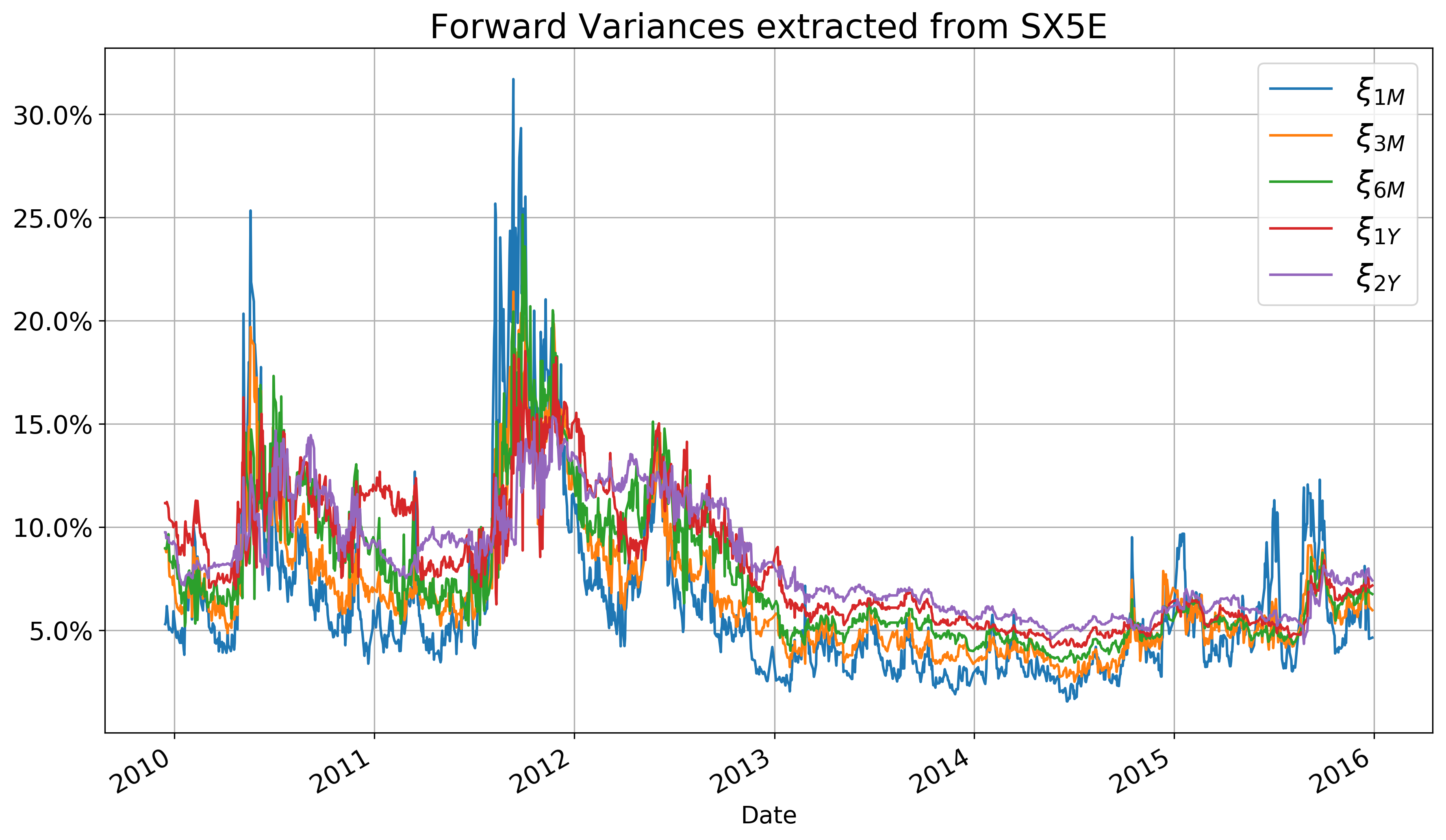}
    \caption{Forward variances extracted from variance swap quotes on SX5E}
    \label{fig:forwardVars}
\end{figure}

\begin{figure}[H]
    \centering
    \includegraphics[scale=0.3]{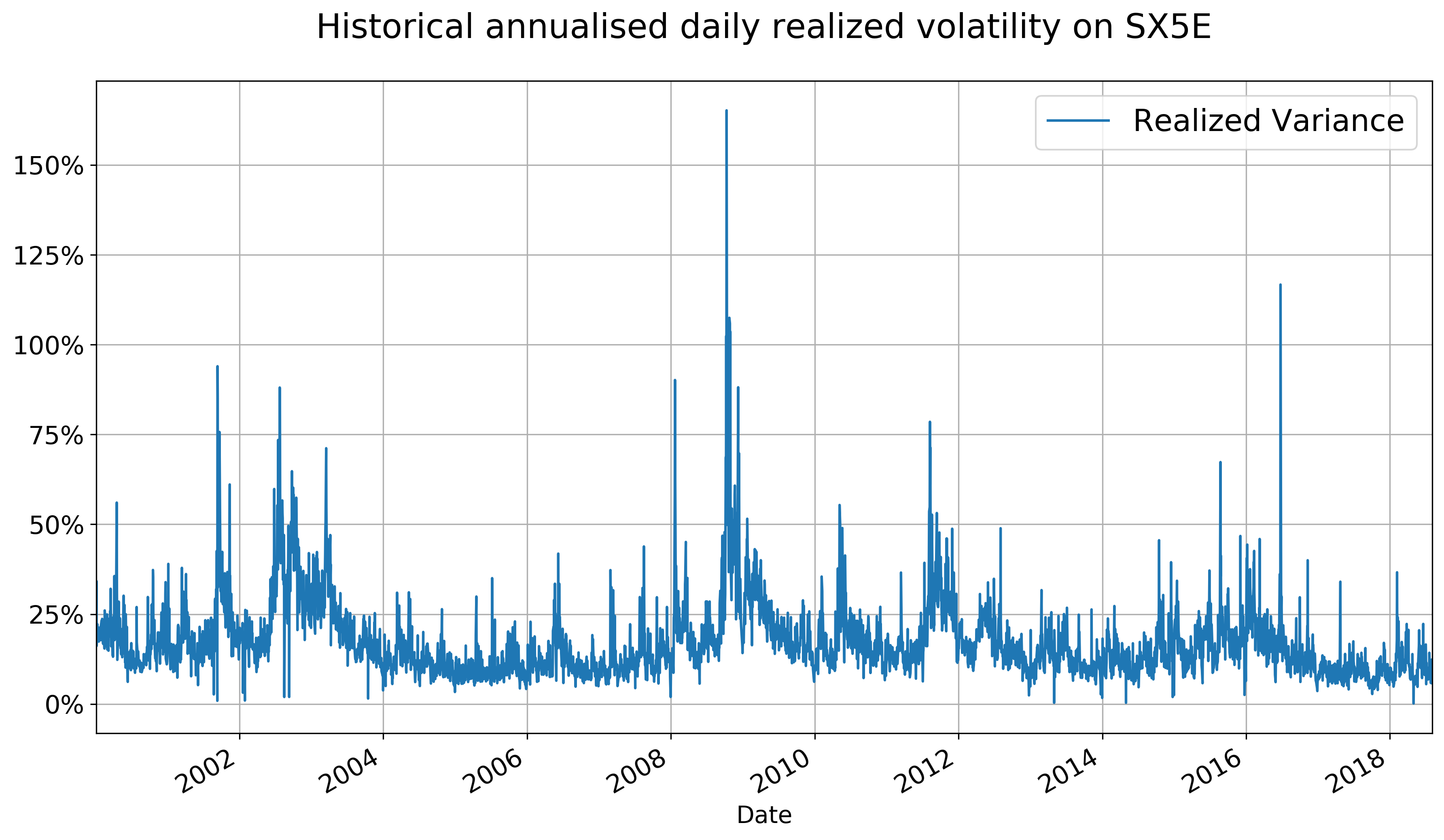}
    \caption{Annualised daily realised volatility on SX5E}\label{fig:realised_variance}
\end{figure}
In order to apply formula~\eqref{eq:riskPremium} we need the following ingredients:
\begin{equation}\label{eq:ingredients}
\text{Parameters }
H,\nu,\rho,
\qquad
 \EE^\PP[v_t|\mathcal{F}_0], \qquad
\EE^\QQ[v_t|\mathcal{F}_0].
\end{equation}
So far we have obtained $\EE^\QQ[v_t|\mathcal{F}_0]$ from Variance Swap market quotes. 
The next step is to estimate $(H,\nu,\rho)$ using historical time-series. 
Gatheral, Jaisson and Rosenbaum~\cite{gatheral2022volatility}, explain how to estimate~$\widehat{H}$ and~$\widehat{\nu}$ from daily volatility data (Figure~\ref{fig:realised_variance}), and we follow their approach using a 100-day rolling window (Figure~\ref{fig: Hnu})
and refer the reader to the original paper for details. 
\begin{figure}[H]
    \centering
    \includegraphics[scale=0.25]{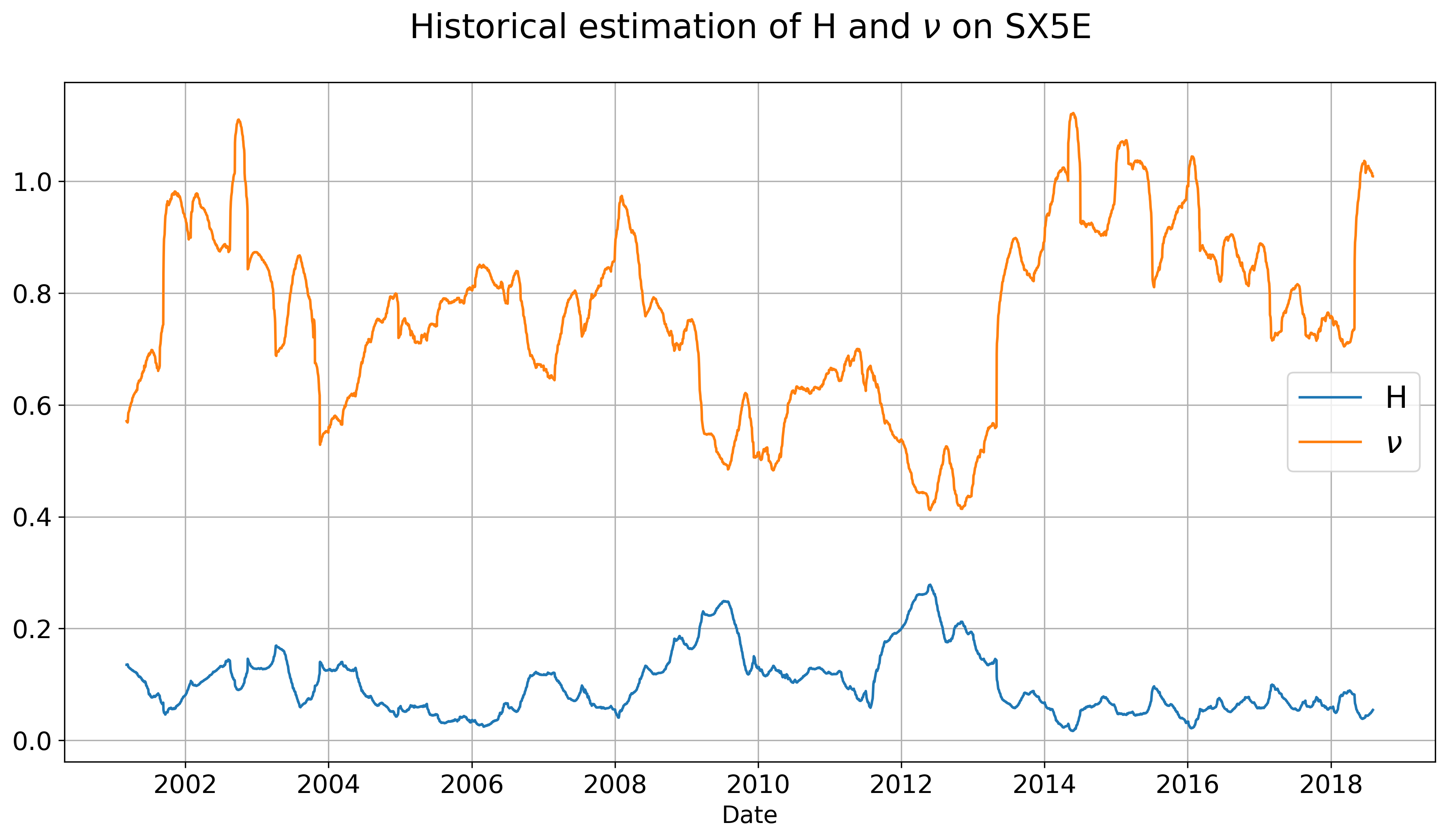}
    \caption{Estimated $\widehat{H}$ and $\widehat{\nu}$ on SX5E.}
    \label{fig: Hnu}
\end{figure} 

To estimate the correlation parameter we use
$\displaystyle \Corr\left(Z^H_t-Z^H_s,\int_s^t \D W_s\right)
 = \frac{\rho\sqrt{2H}}{H_+}$,
which allows us to estimate the correlation with the proxy
\begin{align}
    \widehat{\rho}
     = \frac{\widehat{H}+\half}{\sqrt{2\widehat{H}}}\ \widehat{\Corr}\left(\frac{\log(S_{t_i})-\log (S_{t_{i-1}})}{\sqrt{v_{t_{i-1}}}},\log(v_{t_i})-\log(v_{t_{i-1}})\right).
\end{align}
Figure~\ref{fig: corr} below displays the historical estimates using a estimation window of 100 days.
\begin{figure}[H]
    \centering
    \includegraphics[scale=0.25]{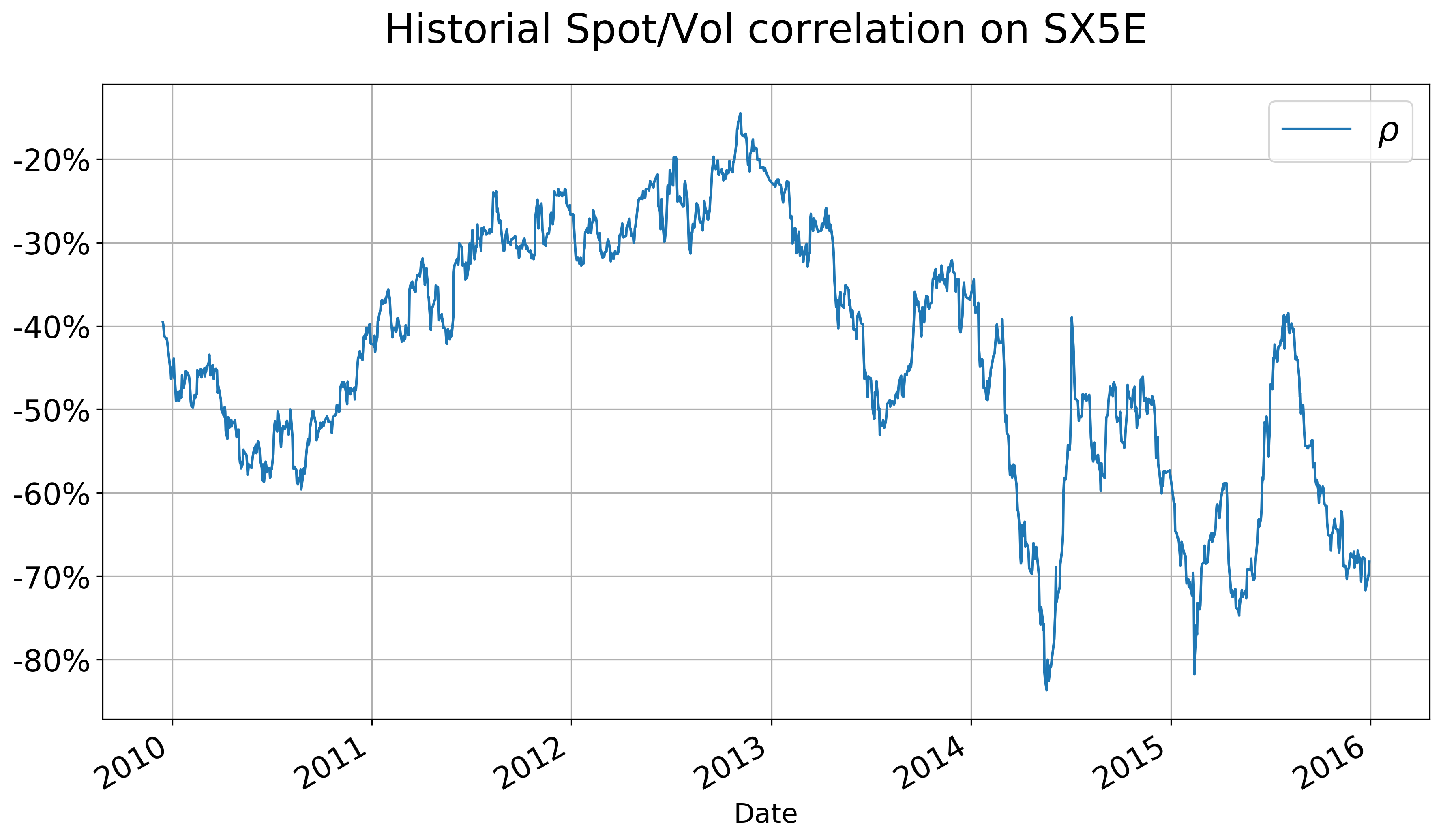}
    \caption{Daily correlation estimate on SX5E and realised volatility.}\label{fig: corr}
\end{figure}
To forecast volatility and obtain $\EE^\PP[v_t|\mathcal{F}_0]$, we proceed as in~\cite{gatheral2022volatility} and use the forecasting formula for the fractional Brownian motion due to Nuzman and Poor~\cite{nuzman2000linear}:

\begin{equation}\label{forecast under Q} 
    Z^H_{t+\Delta}|\Ff_t\sim\mathcal N\left(\frac{\cos(H\pi)}{\pi}\Delta^{H_+}\int_{0}^t\frac{Z^{H}_s \D s}{(t-s+\Delta)(t-s)^{H_+}},\frac{C_H\Delta^{2H}}{2H}\right).
\end{equation}

Finally, we orderly estimate $\lambda_i$ for each $i=1,\ldots,n$ using Theorem~\ref{thm: risk premium}
and the piecewise constant assumption~\eqref{eq:lambdaDis}, as
\begin{align}
    \sum_{j=1}^i \lambda_j\int_{T_{j-1}}^{T_j} \kf_{\Hm}(t,u) \D u=\frac{1}{\nu(1-\rho^2)}\log\left(\frac{\xi_0(T_i)}{\EE^\PP[v_{T_i}|\Ff_0]}\right).
\end{align}
Figure~\ref{fig:riskPremiumHistorial} shows the historical evolution of the risk premium process.

\begin{figure}[H]
    \centering
    \includegraphics[scale=0.3]{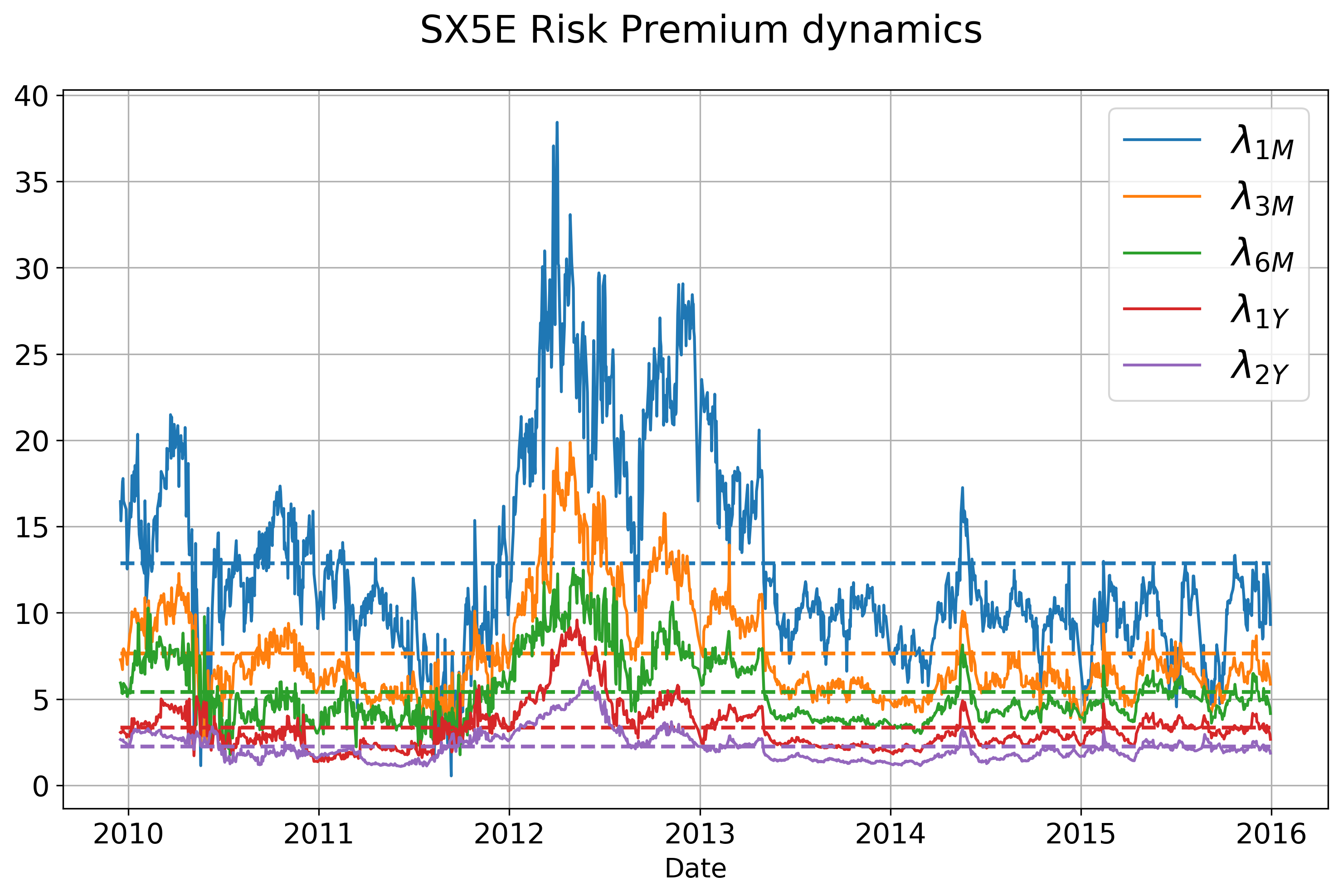}
    \caption{Daily SX5E risk premia; dashed lines represent means.}
    \label{fig:riskPremiumHistorial}
\end{figure}

\begin{remark}
We would like to emphasise that assessing the best method to estimate~\eqref{eq:ingredients} is beyond the scope of this paper.
However, as highlighted in the introduction, we stress the importance of Theorem~\ref{thm: risk premium} towards which this empirical work provides a first step. 
\end{remark}

\appendix
{
\section{Proof of Remark~\ref{rem_eq_c}}\label{app:rem_eq_c}
Following the ideas in~\cite[Proof of Theorem 1.1]{gassiat2019martingale}, 
we show that Assumption~\ref{assu:General}(iv) is guaranteed provided that Assumption~\ref{assump_0} and~\ref{assu:General}(i)-(ii) hold and the processes~$\mu, \gamma, r$ belong to~$\FF_{b}$. 
From the definition of the change of measure and the stopping time in~\eqref{eq: Pn Measure}, then $\Ddg_{t\wedge\tau_n} = \frac{\D\widehat{\PP}_n}{\D\PP}\bigg|_{\Ff_t}$.
For any $s\in\TTp$, the random function $f(x) := \frac{r_s-\mu_s}{\sqrt{\psi(s,x)}}$ is $\PP$-bounded on $(-\infty, a]$
for any $a>0$ since~$r$ and~$\mu$  are $\mathbb{P}$-bounded, and $\psi(s, \cdot)$ bounded away from zero on intervals of the form $(-\infty, a]$, by Assumption~\ref{assu:General}(i) together with the additional assumptions in Remark~\ref{rem_eq_c}.
Then, again by Proposition~\ref{prop:local martingale},
    \begin{equation}\label{eq:martingaleviaStop}
    1 = \EE\left[\Ddg_{T\wedge \tau_n}\right]
     = \EE\left[\Ddg_{T}\ind_{\{T< \tau_n\}}\right]
      + \EE\left[\Ddg_{\tau_n}\ind_{\{\tau_n\leq T\}}\right].
    \end{equation}
    The first term in~\eqref{eq:martingaleviaStop} converges  to $\EE\left[\Ddg_{T}\right]$ as~$n$ tends to infinity, yielding
    $$
    1-\EE\left[\Ddg_{T}\right]=\lim_{n\uparrow\infty}\EE\left[\Ddg_{\tau_n}\ind_{\{\tau_n\leq T\}}\right].
    $$
Girsanov's theorem implies
$\EE\left[\Ddg_{\tau_n}\ind_{\{\tau_n\leq T\}}\right] = \widehat{\PP}_n(\tau_n\leq T)$,
where $\widehat{\PP}_n$ is defined such that    
$\widehat{W}^n_t=W^\PP_t-\int_{0}^{t\wedge \tau_n} \ssf_u\D u$
is a $\widehat{\PP}_n$-Brownian motion. Then, under $\widehat{\PP}_n$, the process $Y$ becomes
$$
    Y_t
    =\widehat{Y}_t^n + \int_{0}^{t\wedge \tau_n} k(t,s) \Big(\rho \ssf_u +\rrho\;\gamma_u \Big)\D u
    =\widehat{Y}_t^n + \int_{0}^{t\wedge \tau_n} k(t,s) \lambda_u\D u,
$$
where $\displaystyle\widehat{Y}^n_t:=\int_{0}^t k(t,s)\D \widehat{Z}^n_s$ 
and
$$
    \widehat{Z}^n_t 
    = Z^\PP_t-\int_{0}^{t\wedge \tau_n} \Big(\rho \ssf_u + \rrho\;\gamma_u\Big)\D u
    = Z^\PP_t-\int_{0}^{t\wedge \tau_n} \lambda_u \D u,
$$
for $t\geq 0$, where $\widehat{Z}^n$ is a $\widehat{\PP}_n$-Brownian motion. 
Furthermore, by Assumption~\ref{assu:General}(ii), we have
\begin{align}\label{eq:ineqPQ}
\widehat{\PP}_n\left(\sup_{t\in \TT} Y_t\geq n\right)
    & = \widehat{\PP}_n\left(\sup_{t\in \TT}\left\{\widehat{Y}_t^n + \int_{0}^{t \wedge \tau_n} k(t,u)\lambda_u\D u\right\}\geq n\right)\\ \nonumber
    & \leq \widehat{\PP}_n\left(\sup_{t\in \TT} \widehat{Y}_t^n + \sup_{t\in \TT}\left\{\int_{0}^{t \wedge \tau_n} k(t,u)\lambda_u \D u\right\}  \geq n\right)\\\nonumber
    &\leq \widehat{\PP}_n\left(\sup_{t\in \TT} \widehat{Y}_t^n  \geq n - K_\TT \right),
\end{align} 
Inequality~\eqref{eq:ineqPQ}, in turn, implies $\widehat{\PP}_n(\tau_n\leq T) \leq \widehat{\PP}_n(\widehat{\tau}_n\leq T)$,
for $\widehat{\tau}_n:=\inf\{t\geq 0,\; \widehat{Y}_t^n=n-K_\TT\}$. 
Finally, since~$\widehat{Z}^n$ is a $\widehat{\PP}_n$-Brownian motion, we obtain
$$
    \lim_{n\uparrow\infty}\widehat{\PP}_n(\tau_n\leq T)\leq \lim_{n\uparrow\infty}\widehat{\PP}_n(\widehat{\tau}_n\leq T)
    = \lim_{n\uparrow\infty}\PP\left(\sup_{t\in \TT} Y_t\geq n-K_\TT\right)
    = 0,
$$
and it follows that $\displaystyle \frac{\D\QQ}{\D\PP}$ is indeed a true martingale and note that $\displaystyle\lim_{n\uparrow\infty}\widehat{\PP}_n=\QQ$. 
In the sense that the relation~\eqref{eq:QBM} holds between the~$\PP$ and~$\QQ$ Brownian motions.
}

\bibliography{References}
\bibliographystyle{siam}

\subsection*{Licence and Data Access statement}
\textit{For the purpose of open access, the author(s) has applied a Creative Commons Attribution (CC BY) licence (where permitted by UKRI, ‘Open Government Licence’ or ‘Creative Commons Attribution No-derivatives (CC BY-ND) licence’ may be stated instead) to any Author Accepted Manuscript version arising’}

\textit{The data underpinning this study were obtained from a combination of publicly and commercially available sources:
\begin{itemize}
\item Oxford-Man Institute of Quantitative Finance Realized Library: Available for academic research use at https://realized.oxford-man.ox.ac.uk/ upon registration and agreement to the terms of use.
\item Yahoo Finance: Market data were retrieved from https://finance.yahoo.com, a publicly accessible platform, under their data usage terms.
\item Bloomberg Terminal: Data sourced from Bloomberg are subject to licensing restrictions and cannot be shared publicly. Access to Bloomberg data requires an institutional subscription. 
\end{itemize}
Some components of the dataset are restricted due to commercial licensing agreements, and therefore cannot be made openly available. The decision to restrict access is to comply with the contractual obligations agreed with commercial data providers.
The present authors may be contacted for further information about the datasets used here.
}
\end{document}